\let\csname equation*\endcsname=\relax 
\let\csname endequation*\endcsname=\relax 
\newcommand\matthree[9]{%
  \begin{pmatrix}
  #1 & #2 & #3 \\ #4 & #5 & #6 \\ #7 & #8 & #9
  \end{pmatrix}%
}
\newcommand\T{\rule{0pt}{3.2ex}}       
\newcommand\B{\rule[-1.6ex]{0pt}{0pt}} 
\newcommand{\tmop}[1]{\ensuremath{\operatorname{#1}}}
\newcommand{\mathd}{\mathrm{d}}
\newcommand{\id}{\mathds{1}}
\newtheorem{prop}{Proposition}
\newtheorem{corollary}{Corollary}
\newcounter{dummy}
\begin{document}

\title[]{The interplay between local and non-local master equations: exact and approximated dynamics}

\author{Nina Megier, Andrea Smirne and Bassano Vacchini}

\address{Dipartimento di Fisica “Aldo Pontremoli”, Università degli Studi di Milano, via Celoria 16, 20133 Milan, Italy}
\address{Istituto Nazionale di Fisica Nucleare, Sezione di Milano, via Celoria 16, 20133 Milan, Italy}
\ead{nina.megier@mi.infn.it}

\vspace{10pt}
\begin{indented}
\item[]\today
\end{indented}

\begin{abstract}
Master equations are a useful tool to describe the evolution of open quantum systems.
In order to characterize the mathematical features and the physical origin of the dynamics, it is often useful
to consider different kinds of master equations for the same system. Here, we derive an exact connection
between the time-local and the integro-differential descriptions, focusing on the class of commutative dynamics. 
The use of the damping-basis formalism allows us to devise a general procedure to go from one master equation to the other and vice-versa, by working with functions of time and their Laplace transforms only.
We further analyze the Lindbladian form of the time-local and the integro-differential master equations, where we account for the appearance of different sets of Lindbladian operators.
In addition, we investigate a Redfield-like approximation, that transforms the exact integro-differential equation
into a time-local one by means of a coarse graining in time. 
Besides relating the structure of the resulting master equation to those associated with the exact dynamics, 
we study the effects of the approximation on Markovianity. In particular, we show that, against expectation, the coarse graining in
time can possibly introduce memory effects, leading to a violation of a divisibility property of the dynamics.    
\end{abstract}
\noindent{\it Keywords\/}: open quantum systems, master equations, memory kernels, quantum Markovianity


\section{Introduction}\label{sec:intr}
Any realistic physical system is unavoidably coupled to some external degrees of freedom and should then be
treated as an open system. This is especially relevant in the realm of quantum physics, which describes phenomena on small scales,
typically fragile under the interaction with the environment.
As a consequence, the theory of open quantum systems \cite{Breuer2002,Rivas2012} has a wide application area, e.g., in quantum chemistry \cite{TN_libero_mab214255846}, quantum information \cite{nielsenbook} and even biophysics \cite{TN_libero_mab21000245634,Huelga2013}. 

However, accounting for the interaction of the system of interest with the external degrees of freedom has its price. 
In particular, the equations of motion describing the evolution of open quantum systems
are either fully characterized, but derived under very restrictive assumptions, or in principle appropriate for a wide range of dynamical evolutions, but computationally demanding and at least partially unexplored. The former situation refers to the Gorini-Kossakowski-Sudarshan-Lindblad (GKSL) master equation \cite{Gorini1976,lindblad1976}, which is associated to quantum dynamical semigroups; the latter is based on more general time-local master equations or on integro-differential master equations, fixed by a memory kernel.
The focus of this paper is precisely on the connection between the latter types of description.

Both the time-local and the integro-differential descriptions of the open-system dynamics are highly relevant. The local one 
is better suited to access some of the properties of the evolution and to conduct calculations (e.g. numerically),
while the approach based on the use of memory kernels often gives a better insight of the physical processes underlying the dynamics. 
This is the case, for example, for quantum semi-Markov processes, where continuous in time quantum evolutions are randomly interrupted by jumps \cite{Budini2004,breuer2008,PhysRevE.79.041147,bassano_2012,PhysRevA.94.020103, PhysRevLett.117.230401}. The structure of the memory kernels for these processes clearly reflects this origin and additionally guarantees the property of complete positivity (CP) of the corresponding dynamical maps. 
For a given system it is in general of advantage to know both representations of its evolution, as each can extend one's knowledge and it is rarely possible to know a-priori which is the most suitable one.

The first goal of this paper is to shed some light on the connection between the descriptions given by the time-local
and the  integro-differential approach, both treated in an exact way. We use a damping-basis representation of the generator and memory kernel, which proves to be a powerful tool to characterize their
structural properties, as well as the connection between them \cite{PhysRevLett.104.070406}.
Focusing on commutative dynamics, i.e., such that the dynamical maps at different times commute \cite{Chruscinski2010,Chruscinski2014}, 
we define a 
general procedure to go directly from one form of the master equation to the other,
and viceversa.
In particular, we give an explicit link between the damping-basis description
of the time-local and the integro-differential master equations, which only relies on transformations
of functions of time and their Laplace transform. Furthermore, our analysis accounts for the differences and similarities
of the structural properties of the time-local and integro-differential master equations, when expressed in the canonical Lindbladian
form\cite{Gorini1976,lindblad1976,Breuer2002}. The possibly different sets of Lindblad operators of the two master equations
are traced back
to specific relations among the eigenvalues of the corresponding damping-basis
representations. In this way, as shown in the examples, we provide a common theoretical framework to various models considered in the physical literature.

The second part of the paper concerns the study of a possible simplified treatment of the integro-differential master equation,
which enforces a time-local structure, and can be seen as an approximated link between the two different descriptions of the open-system dynamics. 
Relying on the idea that 
the memory kernel is generally localized around the origin, with a width which can be interpreted 
as the memory time of the dynamics, one can define a general coarse-graining operation leading to a time-local master equation, along the lines of the seminal work of Redfield \cite{Redfield1957a}.
Here, we first show that the damping-basis description also enables us to connect this approximated time-local description and the two original, time-local and integro-differential, ones.  Furthermore, we study how moving to a Redfield-like master equation modifies the (non-)Markovian nature of the dynamics.

In recent years, the investigation of memory effects in the open quantum system dynamics, summarized under the term of “non-Markovianity", has attracted a great interest, among others, in connection with quantum thermodynamics \cite{Erez:2008ep,Strasberg2016,Pezzutto2016}, quantum-control theory \cite{Mukherjee2015} and quantum metrology \cite{Matsuzaki2011,PhysRevLett.109.233601,Smirne2016,Haase2018}. Till now, a wide range of non-equivalent characterizations of quantum non-Markovianity was introduced in the literature, see \cite{Rivas_2014, Breuer2016a, LiHallWiseman2017} for recent reviews. A lot of emphasis has been put on the divisibility properties of the dynamical maps as a possible signature of non-Markovian behavior. In particular, an open-system dynamics is called (C)P-divisible if the dynamical maps can be decomposed via (completely) positive propagators \cite{PhysRevLett.105.050403,Vacchini2011}.
CP-divisible dynamics correspond to so-called generalized GKSL master equations, which are time-local master equations with positive, but time dependent rates \cite{hallcresserandersson}; no equivalent general constraint is known for the memory-kernel master equations. 
The property of P-divisibility has a physical interpretation in terms of an information back-flow to the reduced system \cite{Bassano2015}, and appears to be related to a continuous-measurement interpretation \cite{Smirne2019}.
A characterization of P-divisible dynamics is only known for time-local generators \cite{Breuer2016a}, thanks to a result about positive semigroups \cite{Kossakowski1972a}, while the case of memory kernels remains undiscovered beyond some special cases.

From the structures of time-local and integro-differential equations one is tempted to conclude that the first description corresponds to “memoryless dynamics" and the latter introduces some memory effects, but neither statement is in general true. In particular, one can easily show that the time-local representation is always possible for dynamics given by invertible maps \cite{hallcresserandersson}, no matter how much non-Markovian the evolution is.
The time-local and the integro-differential descriptions seem to be somehow complementary: if one has a well behaving form, the other one is often singular, however both can describe non-Markovian dynamics. Here, with the help of the damping-basis approach, we will also investigate the relation between the (C)P-divisibility of the exact and Redfield-like dynamics. 
Despite the intuition that the coarse graining in time might wash out the memory effects due to the interaction
with the environment, we will show that imposing the Redfield-like master equation might actually result in going from a CP-divisible
dynamics into a non-CP-divisible one.

The rest of the paper is organized as follows. 
Section \ref{sec:structure} is concerned with the structure of the time-local generator and the memory kernel
associated to the same open-system dynamics.
After recalling the damping-basis formalism in
Section \ref{sec:dampingBasis}, we apply it to the class of commuting dynamics in Section \ref{sec:TCLvsNZ}. Here, the main result of the paper is given: Proposition \ref{prop:str}, where a direct connection between the damping-basis representation of
the time-local generator and the memory kernel is stated.
In Section \ref{sec:Lindblad} we further analyze the relation between the time-local
and the integro-differential descriptions using a Lindbladian form of the master equation. 
In Section \ref{sec:Redfield}, we investigate the
main features of the related Redfield-like master equation. We first work out the structure of the time-local generator obtained by approximating the exact memory kernel master equation in Section \ref{sec:Redfield1}. Finally, in Section \ref{sec:Redfield2} we show that this approximation, somehow counter-intuitively, does not always improves the divisibility properties of the time evolution. Section \ref{sec:summ} summarizes our findings.
 
\section{Structure of the master equations}\label{sec:structure}
The properties of the open quantum system alone are fixed by its density operator $\rho(t)$, also referred to as reduced state, at a generic time $t$. 
The dynamics of the open system is thus fully characterized by a family of dynamical maps $\{\Lambda_t\}_{t \geq 0}$, which map the initial reduced state $\rho(0)$ to the reduced state at later times, according to $\rho(t)=\Lambda_t[\rho(0)]$ ($\Lambda_0=\id$). 
In the time-local description of the dynamics, the family of dynamical maps satisfies the following equation 
\begin{equation}\label{eq:tcl}
\frac{\mathd}{\mathd t} \Lambda_t =  \mathcal{K}^{\tmop{TCL}}_{t} \Lambda_t,
\end{equation}
where the superscript $\tmop{TCL}$ stands for time-convolutionless \cite{Breuer2002};
instead, the integro-differential approach builds on the equation
\begin{equation}\label{eq:nz}
\frac{\mathd}{\mathd t} \Lambda_t = \int_0^t \mathd \tau \mathcal{K}^{\tmop{NZ}}_{t-\tau} \Lambda_\tau\equiv (\mathcal{K}^{\tmop{NZ}} * \Lambda)_t,
\end{equation}
where the symbol $*$ denotes the convolution in time, $\mathcal{K}^{\tmop{NZ}}_{t}$ is the memory kernel and $\tmop{NZ}$ stands for Nakajima-Zwanzig \cite{Breuer2002}.
The conditions on the time-local generator $\mathcal{K}^{\tmop{TCL}}_{t}$ and on the memory kernel $\mathcal{K}^{\tmop{NZ}}_{t}$ guarantying that $\Lambda_t$ is a proper dynamical map, i.e. completely positive\footnote{Complete positivity generalizes the property of positivity and takes into account the fact that the open quantum system can be entangled with some inaccessible degrees of freedom.} and trace preserving (CPTP), are in general unknown. Only in some limited cases definite statements in this respect have been obtained, see e.g. \cite{PhysRevLett.117.230401,Budini2004,Barnett2001,PhysRevA.70.010304,Maniscalco2007,Vacchini2013a,Vacchini2014a,Witt2017,PhysRevA.96.022129, Reimer2018,filippov2019phase,Vacchini2019,Nestmann2020a}.

\subsection{Damping bases}\label{sec:dampingBasis}
In this paper, we restrict to the case where the reduced system
is associated with a Hilbert space $\mathcal{H}$ of finite dimension $N$.
The set of linear operators on $\mathcal{H}$ is denoted as $\mathcal{B}(\mathcal{H})$ and it is an $N^2$ dimensional Hilbert space.
Hence, given a linear map $\Xi$ acting on $\mathcal{B}(\mathcal{H})$, often referred to as super-operator,
and a basis $\left\{\sigma_{\alpha}\right\}_{\alpha=1, \ldots, N^2}$
of $\mathcal{B}(\mathcal{H})$ orthonormal with respect to the Hilbert-Schmidt scalar product,
\begin{equation}\label{eq:orthobasis}
\langle \sigma_{\alpha} , \sigma_{\beta} \rangle = {\rm{Tr}}[ \, \sigma^{\dag}_{\alpha}\, \sigma_{\beta}] = \delta_{\alpha \beta},
\end{equation}
we can write the action of $\Xi$ on a generic
element $\omega \in \mathcal{B}(\mathcal{H})$ as \cite{Chruscinski2010,Smirne2010}
\begin{align}\label{eq:lexp}
\Xi(\omega) = \sum_{\alpha \beta =1}^{N^2} M^{\Xi}_{\alpha \beta} 
\rm{Tr}\left[\sigma^{\dag}_\beta \omega\right] \sigma_\alpha, && M^{\Xi}_{\alpha \beta} = \rm{Tr}\left[\sigma^{\dag}_{\alpha} \Xi(\sigma_{\beta}) \right].
\end{align}
The matrix $M^{\Xi}$ associated to the map $\Xi$ is in general not hermitian
and only allows for a spectral representation in Jordan form, leading to the
expression
\begin{equation}
  \label{eq:Jordan} \Xi (\omega) = \sum_{\alpha = 1}^M \left( \lambda_{\alpha}
  \sum_{\mu = 1}^{k_{\alpha}}  \rm{Tr} \left[ \varsigma^{\dag}_{\alpha, \mu} 
  \hspace{0.17em} \omega \right] \tau_{\alpha, \mu} + \sum_{\mu =
  1}^{k_{\alpha}}  \rm{Tr} \left[ \varsigma^{\dag}_{\alpha, \mu} 
  \hspace{0.17em} \omega \right] \tau_{\alpha,  \mu-1} \right) \mathrm{\qquad
  \text{with} \quad \omega \in \mathcal{B} (\mathcal{H}),}
\end{equation}
where $M \leqslant N^2$ is the number of eigenvalues each counted with its
geometric multiplicity, and $k_{\alpha}$ the algebraic multiplicity, reducing
to one if the matrix can be diagonalized. We define $\tau_{\alpha,0}=0$, and for $k_{\alpha} = 1$ we will consider the identifications
$\tau_{\alpha, 1} \equiv \tau_{\alpha}$ and $\varsigma_{\alpha, 1} \equiv
\varsigma_{\alpha}$. The families of $N^2$ operators $\{ \tau_{\alpha, \mu}
\}_{\alpha, \mu}$ and \ $\{ \varsigma_{\alpha, \mu} \}_{\alpha, \mu}$
appearing in Eq.~(\ref{eq:Jordan}) are related according to
\begin{equation}
  \label{eq:orthobasisGEN} \langle \varsigma_{\alpha, \mu}, \tau_{\beta, \nu}
  \rangle = \delta_{\alpha \beta} \delta_{\mu \nu} .
\end{equation}
The map $\Xi$ is said to be diagonalizable iff the corresponding matrix
$M^{\Xi}$ is. In this case, according to the previous identifications we can
write
\begin{equation}
  \label{eq:dampingdec} \Xi (\omega) = \sum_{\alpha = 1}^{N^2}
  \lambda_{\alpha} \mathrm{Tr \left[ \varsigma^{\dag}_{\alpha} 
  \hspace{0.17em} \omega \right]  \hspace{0.17em} \tau_{\alpha}  \qquad
  \text{with} \quad \omega \in \mathcal{B} (\mathcal{H}),}
\end{equation}
and the two bases of $\mathcal{B} (\mathcal{H})$, $\{ \tau_{\alpha} \}_{\alpha
= 1, \ldots, N^2}$ and $\{ \varsigma_{\alpha} \}_{\alpha = 1, \ldots, N^2}$,
which are not necessarily orthogonal bases, are sometimes referred to as
bi-orthogonal bases, obeying the orthogonality relation
\begin{equation}
  \label{eq:ortdamp} \langle \varsigma_{\alpha}, \tau_{\beta}
  \rangle = \delta_{\alpha \beta} .
\end{equation}
Within the context of open quantum systems, these bases were introduced in \cite{Briegel1993}
and named damping bases; in particular, there they were associated with a GKSL generator. In the following we will focus on the case of
diagonalizable maps, which is enough to cover all examples we will consider,
corresponding to relevant physical examples typically considered in the
literature, and in particular allows to consistently consider time dependence of the eigenvalues only.

The damping bases are also strictly connected to the relation
between the map $\Xi$ and its dual $\Xi'$, where the latter is defined by
\begin{equation}\label{eq:dual}
\langle \omega , \Xi(\rho)\rangle =\langle \Xi'(\omega), \rho \rangle \quad \forall \omega, \rho \in 
\mathcal{B}(\mathcal{H}).
\end{equation}
Using the damping bases, one finds
\begin{equation}\label{eq:dualmapdiag}
\Xi'(\omega) = \sum_{\alpha=1}^{N^2} \lambda^*_{\alpha } \rm{Tr}\left[\tau^{\dag}_{\alpha}\, \omega\right] \varsigma_{\alpha} \qquad \omega \in \mathcal{B}(\mathcal{H}),
\end{equation}
where $c^*$ is the complex conjugate of $c$.
From Eqs.~(\ref{eq:dampingdec}), (\ref{eq:ortdamp}) and (\ref{eq:dualmapdiag}) one can then see that the operators 
$\left\{\tau_{\alpha}\right\}_{\alpha = 1, \ldots, N^2}$ and $\{\varsigma_{\alpha}\}_{\alpha = 1, \ldots, N^2}$ 
are the eigenvectors, respectively, of the linear map $\Xi$ and of its dual $\Xi'$ with respect 
to complex conjugates eigenvalues, i.e.
\begin{equation}
\Xi (\tau_{\alpha}) = \lambda_{\alpha}\, \tau_{\alpha}, 
\qquad \,\, \Xi' (\varsigma_{\alpha}) = \lambda^*_{\alpha}\, \varsigma_{\alpha} \quad \alpha=1,\ldots,N^2. \label{eq:eigen}  
\end{equation}
Indeed, if $\Xi$ is a normal operator, i.e., $[\Xi, \Xi']=0$,
then the damping bases both coincide with a single orthonormal basis; if $\Xi$ is a Hermitian map, in addition
the eigenvalues are real, $\lambda_\alpha=\lambda^*_\alpha$.

Let us now move to one-parameter families of maps, $\left\{\Xi_t\right\}_{t\geq 0}$,
which are used to describe the dynamics of open quantum systems.
In general, both the coefficients and the operators in Eq.~(\ref{eq:dampingdec})
will depend on time. However, it can well happen that the time dependence is enclosed in the eigenvalues
only, while the corresponding damping bases are time-independent, i.e., that one has
\begin{equation}\label{eq:dampingtime}  
\Xi_t (\omega) = \sum_{\alpha=1}^{N^2} \lambda_{\alpha}(t) 
\rm{Tr}\left[\varsigma^{\dag}_{\alpha}\, \omega\right]\, \tau_{\alpha} \qquad \omega \in \mathcal{B}(\mathcal{H}). 
\end{equation}
Eq.~(\ref{eq:dampingtime}) implies that the maps at different times commute,
\begin{equation}\label{eq:comm}
\left[\Xi_t , \Xi_s \right] = 0;
\end{equation}
the converse implication holds, if we further assume that the maps $\Xi_s$ and $\Xi_t$ are diagonalizable.
In particular, one can consider a one-parameter family of CPTP dynamical maps
$\left\{\Lambda_t\right\}_{t\geq0}$ which commute at different times (a situation which has been thoroughly investigated in \cite{Chruscinski2010,Chruscinski2014}),
i.e., such that 
\begin{equation}\label{eq:comm2}
    \left[\Lambda_t, \Lambda_s \right] = 0 \qquad \forall t, s\geq0.
\end{equation}
If the family $\left\{\Lambda_t\right\}_{t\geq0}$ satisfies a time-local master equation of the form Eq.~\eqref{eq:tcl}
then the dynamical maps satisfy Eq.~(\ref{eq:comm2}) if and only if the time-local generator
satisfies the analogous commutation relation
\begin{equation}\label{eq:comtcl}
    \left[\mathcal{K}^{\tmop{TCL}}_t, \mathcal{K}^{\tmop{TCL}}_s \right] = 0 \qquad \forall t, s\geq0.
\end{equation}
If we further assume diagonalizability of the dynamical maps and the generator, they will 
share the same time-independent damping bases, see Eq.~(\ref{eq:dampingtime}), 
according to:
\begin{eqnarray}
  \mathcal{K}^{\tmop{TCL}}_t&=& \sum_{\alpha=1}^{N^2}  m^{\tmop{TCL}}_{\alpha}(t)
\rm{Tr}\left[\varsigma^{\dag}_{\alpha}\, \omega\right]\, \tau_{\alpha}, \label{eq:tcldiag}\\
  \Lambda_t &=& \sum_{\alpha=1}^{N^2}m_{\alpha}(t)
                \rm{Tr}\left[\varsigma^{\dag}_{\alpha}\, \omega\right]\, \tau_{\alpha}, \label{eq:mapdiag}
\end{eqnarray}
where
\begin{equation}\label{eq:map-tcl}
    m_{\alpha}(t)= e^{\int_0^t\mathd \tau 
m^{\tmop{TCL}}_{\alpha}(\tau)} .
\end{equation}
Let us stress that the dynamics satisfying Eq.~(\ref{eq:comm2}) include, but are not restricted to,
the case where Eq.~(\ref{eq:tcl}) holds with $\mathcal{K}^{\tmop{TCL}}_t = \gamma(t) \mathcal{L}$, which has been widely studied in the literature \cite{Barnett2001,PhysRevA.70.010304,PhysRevA.71.020101,PhysRevA.73.012111,PhysRevA.81.062120};
as relevant examples, let us mention pure dephasing, depolarization, spontaneous emission, 
two-level system in the presence of an infinite-temperature bosonic bath, photonic losses,
as well as the composition of local evolutions made up of these dynamics.

\subsection{Novel connection between time-local and integro-differential representation}\label{sec:TCLvsNZ}

Given a diagonalizable commutative dynamics, its damping-basis decomposition sets 
a representation of both the dynamical map and the time-local generator in which the time dependence
is fully enclosed in the eigenvalues -- i.e., in complex functions of time -- but not in the operatorial
structure, see Eqs.~(\ref{eq:tcldiag}) and (\ref{eq:mapdiag}). 
Here, we show that this feature is also shared by the memory kernel, which will allow us to derive some novel connections
between the time-local and the integro-differential master equations associated to a given dynamics.

\begin{prop}\label{prop:str} 
Consider a family of dynamical maps $\left\{\Lambda_t\right\}_{t\geq0}$ with time-local generator $\mathcal{K}^{\tmop{TCL}}_t $ and  memory kernel $\mathcal{K}^{\tmop{NZ}}_t $.
Moreover, for any couple of operators $\tau_{\alpha}$ and $\varsigma_{\alpha}$
let $\mathcal{M}_\alpha$ be the linear map acting on $\mathcal{B}(\mathcal{H})$ defined as
\begin{equation}
\mathcal{M}_\alpha(\omega)=\rm{Tr}\left[\varsigma^{\dag}_{\alpha}\, \omega\right]\, \tau_{\alpha} 
\quad \omega \in \mathcal{B}(\mathcal{H}). 
\end{equation}
The following propositions are equivalent:
\begin{itemize}
    \item[i)] the time-local generator $\mathcal{K}^{\tmop{TCL}}_t $ has the damping-basis diagonalization
    \begin{eqnarray}
\mathcal{K}^{\tmop{TCL}}_t &=& \sum_{\alpha=1}^{N^2} m^{\tmop{TCL}}_{\alpha}(t) \mathcal{M}_{\alpha}; \label{eq:ktcl}
\end{eqnarray}
\item[ii)] the memory kernel $\mathcal{K}^{\tmop{NZ}}_t $ has the damping-basis diagonalization
\begin{eqnarray}
\mathcal{K}^{\tmop{NZ}}_t &=& \sum_{\alpha=1}^{N^2} m^{\tmop{NZ}}_{\alpha}(t) \mathcal{M}_{\alpha}; \label{eq:knz}
\end{eqnarray}
\end{itemize}
moreover, the corresponding eigenvalues are related by
\begin{eqnarray}
  m^{\tmop{NZ}}_{\alpha}(t) &=& \mathfrak{I}\left(\frac{u \widetilde{G_{\alpha}}(u)}{1+\widetilde{G_{\alpha}}(u)}\right)(t),
                                \label{eq:mnz}
                                \\
                                m^{\tmop{TCL}}_{\alpha}(t) &=& \frac{G_{\alpha} (t)}{1 + \int_0^t \mathd \tau G_{\alpha} (\tau)},
                                \label{eq:tcltcl}\\
G_{\alpha}(t) &=&\mathfrak{I}\left(\frac{\widetilde{m^{\tmop{NZ}}_{\alpha}}(u)}{u-\widetilde{m^{\tmop{NZ}}_{\alpha}}(u)}\right)(t),  \label{eq:mtcl}
\end{eqnarray}
where we introduced the function
\begin{eqnarray}\label{eq:ggammat}
G_{\alpha}(t) &=& \frac{\mathd}{\mathd t} e^{\int_0^t \mathd \tau m^{\tmop{TCL}}_{\alpha}(\tau)};
\end{eqnarray}
$\widetilde{f_t}(u) \equiv \widetilde{f}_u$ is the Laplace transform of $f_t$
and $\mathfrak{I}\left(g(u)\right)(t)$
the inverse Laplace transform of $g(u)$.
The following relationships further hold
\end{prop}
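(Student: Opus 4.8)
The plan is to exploit the fact that, under the damping-basis decomposition hypothesis, every super-operator in sight is simultaneously "diagonalized" by the time-independent bases $\{\tau_\alpha\}$ and $\{\varsigma_\alpha\}$, so that the super-operator equations \eqref{eq:tcl} and \eqref{eq:nz} collapse, channel by channel, into scalar equations for the eigenvalues. Concretely, I would first show that i) $\Leftrightarrow$ the dynamical map has the form \eqref{eq:mapdiag} with $m_\alpha(t)$ given by \eqref{eq:map-tcl}: this is essentially the content already recalled around Eqs.~\eqref{eq:dampingtime}--\eqref{eq:map-tcl}, since plugging \eqref{eq:ktcl} and \eqref{eq:mapdiag} into \eqref{eq:tcl} and using $\mathcal{M}_\alpha \mathcal{M}_\beta = \delta_{\alpha\beta}\mathcal{M}_\alpha$ (an immediate consequence of the bi-orthogonality \eqref{eq:ortdamp}) gives the decoupled scalar ODEs $\dot m_\alpha(t) = m^{\tmop{TCL}}_\alpha(t)\, m_\alpha(t)$ with $m_\alpha(0)=1$, whose solution is \eqref{eq:map-tcl}. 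The same idempotence-and-orthogonality observation shows that if the memory kernel has the form \eqref{eq:knz}, then \eqref{eq:nz} likewise decouples into the scalar Volterra equations $\dot m_\alpha(t) = (m^{\tmop{NZ}}_\alpha * m_\alpha)_t$.

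Next I would close the loop by a Laplace-transform argument. Laplace-transforming the scalar TCL relation: from \eqref{eq:map-tcl} one gets $G_\alpha(t) = \dot m_\alpha(t)$, hence $\widetilde{G_\alpha}(u) = u\,\widetilde{m_\alpha}(u) - 1$, i.e. $\widetilde{m_\alpha}(u) = (1+\widetilde{G_\alpha}(u))/u$. Laplace-transforming the scalar NZ Volterra equation gives $u\,\widetilde{m_\alpha}(u) - 1 = \widetilde{m^{\tmop{NZ}}_\alpha}(u)\,\widetilde{m_\alpha}(u)$, so $\widetilde{m^{\tmop{NZ}}_\alpha}(u) = (u\,\widetilde{m_\alpha}(u)-1)/\widetilde{m_\alpha}(u) = u\,\widetilde{G_\alpha}(u)/(1+\widetilde{G_\alpha}(u))$, which upon inversion is exactly \eqref{eq:mnz}. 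Solving the same two relations the other way — express $\widetilde{m_\alpha}(u)$ from the NZ side as $\widetilde{m_\alpha}(u) = 1/(u - \widetilde{m^{\tmop{NZ}}_\alpha}(u))$, subtract $1/u$ and multiply by $u$ to read off $\widetilde{G_\alpha}(u) = \widetilde{m^{\tmop{NZ}}_\alpha}(u)/(u-\widetilde{m^{\tmop{NZ}}_\alpha}(u))$ — gives \eqref{eq:mtcl}; and \eqref{eq:tcltcl} follows by writing $m^{\tmop{TCL}}_\alpha(t) = \dot m_\alpha(t)/m_\alpha(t) = G_\alpha(t)/m_\alpha(t)$ and using $m_\alpha(t) = 1+\int_0^t\mathd\tau\,G_\alpha(\tau)$, which is just the integrated form of $G_\alpha = \dot m_\alpha$ with $m_\alpha(0)=1$. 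This establishes ii) from i) (take the scalar $m^{\tmop{NZ}}_\alpha$ produced by \eqref{eq:mnz} and reassemble \eqref{eq:knz}, noting uniqueness of the memory kernel for a given invertible dynamics) and i) from ii) symmetrically.

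The main obstacle — and the point that needs care rather than routine computation — is the uniqueness/well-posedness underpinning the "reassembly" steps: one must argue that the memory kernel $\mathcal{K}^{\tmop{NZ}}_t$ solving \eqref{eq:nz} for a given $\{\Lambda_t\}$ is unique, so that exhibiting a super-operator of the form \eqref{eq:knz} that solves \eqref{eq:nz} actually identifies \emph{the} memory kernel, and dually for the TCL generator. This is standard for invertible dynamical maps (the map $\Lambda_t$ being diagonalizable with the nonvanishing-at-$t=0$ eigenvalues $m_\alpha$ is invertible on a neighbourhood of the origin, and one inverts the convolution in Laplace space), but it should be stated explicitly; the commutativity of the $\Lambda_t$ is what guarantees $\mathcal{K}^{\tmop{TCL}}_t$ and $\mathcal{K}^{\tmop{NZ}}_t$ inherit the \emph{same} damping bases as $\Lambda_t$, via the diagonalizability assumption and the argument already sketched after Eq.~\eqref{eq:comm}. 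One should also note the implicit analytic caveat that the Laplace transforms and their inverses exist in the relevant region — harmless for the dynamics considered but worth flagging.

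Finally, for the "relationships which further hold" promised at the end of the statement, I would simply specialize the scalar dictionary above: e.g. combining \eqref{eq:mnz} and the expression $\widetilde{m_\alpha}(u)=1/(u-\widetilde{m^{\tmop{NZ}}_\alpha}(u))$ one reads off compact identities linking $\widetilde{m^{\tmop{TCL}}_\alpha}$, $\widetilde{m^{\tmop{NZ}}_\alpha}$ and $\widetilde{m_\alpha}$ directly in Laplace space, such as $\widetilde{m^{\tmop{NZ}}_\alpha}(u) = u - 1/\widetilde{m_\alpha}(u)$ and $u\widetilde{m_\alpha}(u) = 1 + \widetilde{G_\alpha}(u)$, together with the time-domain consequence $m_\alpha(t) = 1 + \int_0^t \mathd\tau\, G_\alpha(\tau)$ used above. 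These require no new idea beyond algebraic rearrangement of the two Laplace-transformed master equations.
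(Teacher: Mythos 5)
Your proposal is correct and follows essentially the same route as the paper's proof: decouple the master equations channel by channel using $\mathcal{M}_\alpha\mathcal{M}_\beta=\delta_{\alpha\beta}\mathcal{M}_\alpha$, solve the TCL part to get $m_\alpha(t)$, and pass to the Laplace domain where $\widetilde{m^{\tmop{NZ}}_\alpha}(u)=u-1/\widetilde{m_\alpha}(u)$ yields Eqs.~\eqref{eq:mnz}--\eqref{eq:mtcl}, with \eqref{eq:tcltcl} from $m^{\tmop{TCL}}_\alpha=\dot m_\alpha/m_\alpha$ and $m_\alpha(t)=1+\int_0^t\mathd\tau\,G_\alpha(\tau)$. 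Your explicit remarks on uniqueness of the kernel (via invertibility of $\widetilde{\Lambda}_u$) and on the scalar Volterra form of the NZ equation are only a more detailed rendering of what the paper does implicitly, not a different method.
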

\begin{proof} The proof is given in \ref{app:proofp1}.
\end{proof}

Proposition \ref{prop:str} is the central result of this paper. According to it, 
the time-local generator and the memory kernel have the same 
time-independent damping bases, while the time dependence appears in the eigenvalues only: to go from one
master equation to the other, only functions of time (and their Laplace transforms) are involved,
while the operatorial structure is unchanged; see the left part of Fig.~\ref{fig:prop:unique}.

With reference to the expression Eq.~(\ref{eq:mapdiag})  of the associated time evolution as shown in the proof
Proposition \ref{prop:str} comes along with the relations
\begin{eqnarray}
  m_{\alpha} (t) & = & e^{\int_0^t \mathd \tau m^{\tmop{TCL}}_{\alpha}
  (\tau)} ,  \label{eq:tcl-map1}\\
  m^{\tmop{TCL}}_{\alpha} (t) & = & \frac{1}{m_{\alpha} (t)}
  \frac{\mathd}{\tmop{dt}} m_{\alpha} (t) , \label{eq:tcl-map2}
\end{eqnarray}
as well as 
\begin{eqnarray}
  \widetilde{m^{}_{\alpha}} (u) & = & \frac{1}{u -
  \widetilde{m_{\alpha}^{\tmop{NZ}}} (u)} , \label{eq:nz-map1}\\
  \widetilde{m_{\alpha}^{\tmop{NZ}}} (u) & = & \frac{u
  \widetilde{m^{}_{\alpha}} (u) - 1}{\widetilde{m^{}_{\alpha}} (u)} .
  \label{eq:nz-map2}
\end{eqnarray}
 
\begin{figure}[!tbp]
   \includegraphics[width=\textwidth]{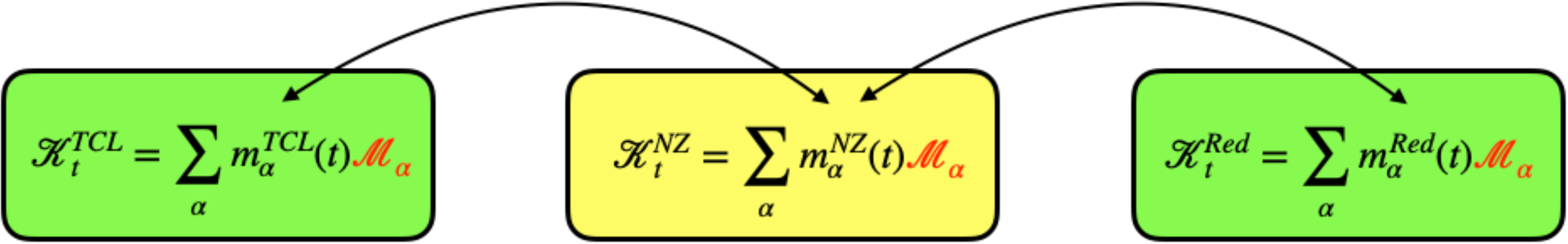}
  \caption{According to Proposition \ref{prop:str} and \ref{prop:red}, 
  the time-local generator, the memory kernel and the Redfield-like generator of commutative dynamics have the same 
time-independent damping bases, while the time dependence appears in the eigenvalues only: to go from one
to the other, only functions of time (and their Laplace transforms) are involved,
while the operatorial structure is unchanged.}
  \label{fig:prop:unique}
\end{figure}
As a consequence, Proposition \ref{prop:str} is particularly useful when 
we want to compare the structures of the time-local generator and the memory kernel.
As a general relevant example, consider the following corollary, which directly follows from the proposition above.
\begin{corollary}\label{cor:str}
Under the same assumptions of Proposition \ref{prop:str}, consider a diagonalizable GKSL generator $\mathcal{L}$ with only one-nonzero
eigenvalue $\ell$, possibly degenerate with degeneracy $d$, i.e.,
\begin{equation}\label{eq:dampingldeg}  
\mathcal{L} = \ell \sum_{\alpha=1}^d \mathcal{M}_{\alpha};
\end{equation}
then the following identities are equivalent
\begin{eqnarray}
\mathcal{K}^{\tmop{TCL}}_t &=& \gamma(t) \mathcal{L};\label{eq:tcll}\\
\mathcal{K}^{\tmop{NZ}}_t &=& \frac{m^{\tmop{NZ}}(t)}{\ell}\mathcal{L}; \label{eq:nzl} 
\end{eqnarray}
where $m^{\tmop{NZ}}(t)$ and $G (t)$ are related as in Eqs.~\eqref{eq:mnz} and \eqref{eq:mtcl}
where now
\begin{equation}
   G(t) = \frac{\mathd}{\mathd t} e^{\ell \int_0^t \mathd \tau \gamma(\tau)}. \label{eq:mnz2}  
\end{equation}
\end{corollary}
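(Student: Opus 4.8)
The plan is to derive Corollary \ref{cor:str} as a direct specialization of Proposition \ref{prop:str}, exploiting the fact that a GKSL generator with a single nonzero eigenvalue $\ell$ of degeneracy $d$ picks out precisely $d$ of the damping-basis maps $\mathcal{M}_\alpha$ (say $\alpha=1,\dots,d$) and annihilates the rest. First I would observe that if $\mathcal{K}^{\tmop{TCL}}_t=\gamma(t)\mathcal{L}$, then by Eq.~\eqref{eq:dampingldeg} this is exactly the damping-basis diagonalization of the form Eq.~\eqref{eq:ktcl} with eigenvalues $m^{\tmop{TCL}}_\alpha(t)=\ell\,\gamma(t)$ for $\alpha=1,\dots,d$ and $m^{\tmop{TCL}}_\alpha(t)=0$ for $\alpha=d+1,\dots,N^2$. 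Proposition \ref{prop:str} then immediately guarantees that the memory kernel has the matching diagonalization Eq.~\eqref{eq:knz}, and since the eigenvalue-transformation formulas \eqref{eq:mnz}--\eqref{eq:mtcl} act independently on each $\alpha$ and send the zero function to the zero function (the Laplace transform of $0$ is $0$, and $0/(u-0)=0$), one gets $m^{\tmop{NZ}}_\alpha(t)=0$ for $\alpha>d$ and a common value $m^{\tmop{NZ}}_\alpha(t)\equiv m^{\tmop{NZ}}(t)$ for $\alpha=1,\dots,d$. Hence $\mathcal{K}^{\tmop{NZ}}_t=m^{\tmop{NZ}}(t)\sum_{\alpha=1}^d\mathcal{M}_\alpha=\tfrac{m^{\tmop{NZ}}(t)}{\ell}\mathcal{L}$, which is Eq.~\eqref{eq:nzl}.

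Next I would make explicit the relation between $m^{\tmop{NZ}}(t)$ and $\gamma(t)$. For each of the active indices $\alpha\le d$, Eq.~\eqref{eq:ggammat} gives
\begin{equation*}
G_\alpha(t)=\frac{\mathd}{\mathd t}e^{\int_0^t\mathd\tau\, m^{\tmop{TCL}}_\alpha(\tau)}=\frac{\mathd}{\mathd t}e^{\ell\int_0^t\mathd\tau\,\gamma(\tau)},
\end{equation*}
which is independent of $\alpha$; calling this common function $G(t)$ recovers Eq.~\eqref{eq:mnz2}. Substituting into Eq.~\eqref{eq:mnz} yields $m^{\tmop{NZ}}(t)=\mathfrak{I}\!\left(u\widetilde{G}(u)/(1+\widetilde{G}(u))\right)(t)$, and the inverse relation \eqref{eq:mtcl} reads $G(t)=\mathfrak{I}\!\left(\widetilde{m^{\tmop{NZ}}}(u)/(u-\widetilde{m^{\tmop{NZ}}}(u))\right)(t)$, exactly as asserted. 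The equivalence of \eqref{eq:tcll} and \eqref{eq:nzl} then follows because the argument runs symmetrically: starting from $\mathcal{K}^{\tmop{NZ}}_t=\tfrac{m^{\tmop{NZ}}(t)}{\ell}\mathcal{L}$ one reads off the eigenvalues $m^{\tmop{NZ}}_\alpha(t)=m^{\tmop{NZ}}(t)$ for $\alpha\le d$ and $0$ otherwise, applies the converse direction of Proposition \ref{prop:str}, and recovers a time-local generator proportional to $\mathcal{L}$ with $\ell\gamma(t)=G_\alpha(t)/(1+\int_0^t G_\alpha)$ via Eq.~\eqref{eq:tcltcl}, consistent with \eqref{eq:mnz2}.

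The only genuinely substantive point to check — and the step I expect to require the most care — is that the correspondence really is index-by-index and that the degenerate block stays degenerate under the transformation, i.e. that all $d$ eigenvalues $m^{\tmop{NZ}}_\alpha(t)$ remain equal. This is clear from the structure of Eqs.~\eqref{eq:mnz}--\eqref{eq:mtcl}: each $m^{\tmop{NZ}}_\alpha$ depends only on the corresponding $m^{\tmop{TCL}}_\alpha$ (through $G_\alpha$), so equal inputs give equal outputs; no cross-terms among different $\alpha$ appear. One should also note in passing that the identification of the $d$-dimensional eigenspace of $\mathcal{L}$ with a subset of the damping basis $\{\tau_\alpha\}$ is consistent with the diagonalizability assumption of Proposition \ref{prop:str}: the remaining $N^2-d$ basis elements span the kernel of $\mathcal{L}$, and on them both generator and memory kernel vanish identically for all $t$. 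With these observations the corollary is immediate, so I would present it as a short deduction rather than an independent proof.
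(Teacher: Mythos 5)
Your proposal is correct and follows the same route as the paper: the paper's own proof is the one-line observation that $m^{\tmop{TCL}}_{\alpha}(t)=\ell\,\gamma(t)$ on the degenerate block (and $0$ elsewhere) together with $\sum_{\alpha=1}^{d}\mathcal{M}_{\alpha}=\mathcal{L}/\ell$, followed by an appeal to Proposition \ref{prop:str}. You simply spell out the details the paper leaves implicit — that the eigenvalue transformations \eqref{eq:mnz}--\eqref{eq:mtcl} act index-by-index, send zero to zero, and preserve the degeneracy — all of which is accurate.
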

\begin{proof}
Simply note that $m^{\tmop{TCL}}(t)= \gamma(t) \ell$ and $\sum_{\alpha=1}^d \mathcal{M}_{\alpha}=\mathcal{L}/\ell$,
see Eq.~(\ref{eq:dampingldeg});
then apply Proposition \ref{prop:str}.
\end{proof}

\noindent In other terms, if $\mathcal{L}$ has only one non-zero eigenvalue, the super-operatorial part of the time-local generator
and the memory kernel are exactly the same;
the difference between the master equations is enclosed in one overall time-dependent factor.
A physically relevant example of this situation will be given below, see Example \hyperref[ex:3]{3}.

In Proposition \ref{prop:str} we assumed commutativity of the dynamics. As
mentioned at the end of Sec.\ref{sec:dampingBasis}, and as we will show
explicitly by means of example, such maps account for several dynamics of
interest for open quantum systems. On the other hand, it is
indeed natural to ask what happens if we relax this assumption. While a full-
fledged extension of the result goes beyond the scope of this work, we show
here that Proposition \ref{prop:str} is indeed also valid for more general
dynamics in modified form, in which a time-dependent damping basis, different for time-local generator and memory kernel, has to be considered.

A relevant class of non-commutative evolutions for which this is the case 
is provided by the phase-covariant qubit dynamics.
Under such an evolution, the Bloch ball shrinks into a possibly rotated and
shifted ellipsoid, such that the overall transformation commutes with the
rotation about a fixed axis. Recently, phase-covariant dynamics have attracted
considerable attention
{\cite{Vacchini2010,Smirne2016,Lostaglio2017,Haase2018,Teittinen2018,filippov2019phase}},
both because of their clear mathematical meaning and structure, and physical
relevance e.g. for metrological tasks {\cite{Haase2018}}. The expression of
the time-local and integro-differential master equation for phase-covariant
dynamics has been worked out in {\cite{filippov2019phase}}, and the
result indeed complies with Eqs. {\eqref{eq:mnz}}-{\eqref{eq:ggammat}}. One
can observe that in this case the validity of these relations follows from
\begin{align}
  \label{conditionPC} m^{\tmop{TCL}}_{\alpha} (t) m^{\tmop{TCL}}_{\beta} (t')
  \mathcal{M}^{\tmop{TCL}}_{\alpha} (t) \mathcal{M}^{\tmop{TCL}}_{\beta} (t')
  = \delta_{\alpha \beta} m^{\tmop{TCL}}_{\alpha} (t) m^{\tmop{TCL}}_{\alpha}
  (t') \mathcal{M}^{\tmop{TCL}}_{\alpha} (t'),
\end{align}
which is indeed a strictly weaker requirement than commutativity
whenever at least one of the eigenvalues
$m^{\tmop{TCL}}_{\alpha} (t)$ is equal to zero, e.g, if the dynamics has at least one steady state
{\cite{Rivas2012}}. The linear
map $\mathcal{M}^{\tmop{TCL}}_{\alpha} (t)$ corresponds to the time-dependent
damping basis of the time local generator: $\mathcal{K}^{\tmop{TCL}}_t =
\sum_{\alpha} m^{\tmop{TCL}}_{\alpha} (t) \mathcal{M}^{\tmop{TCL}}_{\alpha}
(t)$. Eq.~{\eqref{conditionPC}} warrants that the
dynamical map can be written as $\Lambda_t = \sum_{\alpha} m_{\alpha}
(t) \mathcal{M}_{\alpha} (t)$ and the memory kernel as
$\mathcal{K}^{\tmop{NZ}}_t = \sum_{\alpha} m^{\tmop{NZ}}_{\alpha} (t)
\mathcal{M}^{\tmop{NZ}}_{\alpha} (t)$, where the connection between
$m_{\alpha} (t), m^{\tmop{TCL}}_{\alpha} (t)$ and
$m^{\tmop{NZ}}_{\alpha} (t)$ is still given by Eqs.{\eqref{eq:tcldiag}}-{\eqref{eq:mapdiag}} and Proposition \ref{prop:str}, though of course now
the respective damping bases will not coincide.

\subsection{Lindbladian form}\label{sec:Lindblad}

The time-local generator and the memory kernel in Corollary \ref{cor:str} 
are directly proportional to a GKSL generator of quantum dynamical semigroups.
Here we show that, starting from the damping-basis decomposition in Eqs.~\eqref{eq:ktcl} and \eqref{eq:knz}, it is always
possible to write the time-local generator and the memory kernel in a way which is directly related to the GKSL generator; 
we will refer to such form as Lindbladian.
To do so, we essentially apply the general prescription given by Lemma 2.3 in \cite{Gorini1976}
to the situation of interest for us.
Besides providing us with a canonical reference structure which eases the comparison between super-operators, as we show
by different examples, the 
Lindbladian form allows us to infer the (C)P and the (C)P-divisibility of the dynamics
in a more direct way.

Any linear map acting on $\mathcal{B}(\mathcal{H})$ can be represented in several ways.
A relevant example is given by the matrix representation in Eq.~\eqref{eq:lexp},
which is at the basis of the damping-basis decomposition, see Eq.~\eqref{eq:dampingdec}. 
Alternatively, given an orthonormal basis $\left\{\sigma_{\alpha}\right\}_{\alpha=1, \ldots, N^2}$
of $\mathcal{B}(\mathcal{H})$, see Eq.~\eqref{eq:orthobasis}, with $\sigma_{N^2}=\mathds{1}/\sqrt{N}$,
any linear map $\Xi$ acting on $\mathcal{B}(\mathcal{H})$ can be uniquely written
as 
\begin{equation}\label{eq:decomp2}
    \Xi(\omega) = \sum_{\alpha,\beta=1}^{N^2}c_{\alpha, \beta} \,\, \sigma_{\alpha}\omega \sigma^{\dag}_{\beta}
    \qquad \mbox{with} \quad \omega \in \mathcal{B}(\mathcal{H}).
\end{equation}
Note that such a representation is strictly related to the CP of the map $\Xi$: in fact, the matrix of coefficients
with elements $c_{\alpha \beta}$ is positive semidefinite if and only if $\Xi$ is CP, in which case the decomposition
in Eq.~\eqref{eq:decomp2} directly leads to the Kraus decomposition of CP maps.
Most importantly for us, the representation in Eq.~\eqref{eq:decomp2} gives a general characterization also of the time-local
and the integro-differential master equations associated with open-system dynamics. In fact,
let us consider a time-local generator or a memory kernel $\mathcal{K}_t$.    
The dynamics $\{\Lambda_t\}_{t \geq 0}$ resulting from Eqs.~\eqref{eq:tcl} and \eqref{eq:nz} is trace and Hermiticity preserving 
(where the latter means that any Hermitian operator $\omega = \omega^{\dag}$ is mapped at any time $t$
into an Hermitian operator $\Lambda_t(\omega)=(\Lambda_t(\omega))^{\dag}$) if and only if the corresponding
time-local generator and memory kernel satisfy\footnote{The conditions in Eq.~(\ref{eq:cond}) have to be satisfied
by any time-local generator, if we assume that $\Lambda_t^{-1}$ exists: 
this can be checked by applying Eq.~(\ref{eq:tcl}) to a generic initial
state $\rho_0$ and then taking the trace, for the first condition, and the Hermitian conjugate, for the second one, on both sides.
Analogously, they have to satisfied by any memory kernel, if we assume
that $\tilde{\Lambda}_u^{-1}$ exists, as can be checked by
applying Eq.~(\ref{eq:nz}) to
a generic initial state and taking the Laplace transform.\\
}
\begin{eqnarray}
    \rm{Tr}\left[\mathcal{K}_t (\omega)\right] &=& 0 \qquad \forall \omega\in \mathcal{B}(\mathcal{H}), \nonumber\\
    \left(\mathcal{K}_t(\omega)\right)^{\dag} &=& \mathcal{K}_t (\omega^{\dag}) \qquad \forall \omega\in \mathcal{B}(\mathcal{H}). \label{eq:cond}
\end{eqnarray}
Restricting for the sake of convenience to diagonalizable super-operators, in the form
\begin{equation}
\mathcal{K}_t=\sum_{\alpha=1}^{N^2} m_{\alpha}(t) \mathcal{M}_{\alpha},    
\end{equation}
Lemma 2.3 of \cite{Gorini1976} tells us that the two conditions in Eq.~\eqref{eq:cond}
hold if and only if
\begin{equation}\label{eq:gorini23}
\mathcal{K}_t (\omega) = -i \left[ H(t) , \omega \right] + \sum^{N^2-1}_{\alpha, \beta =1} \kappa_{\alpha \beta}(t) \left(\sigma_{\alpha} \omega \sigma^{\dag}_{\beta}
-\frac{1}{2} \left\{\sigma^{\dag}_{\beta} \sigma_{\alpha} , \omega \right\} \right),
\end{equation}
where the coefficients $\kappa_{\alpha \beta}(t)=\kappa_{\beta\alpha}(t)^*$ are given by
\begin{equation}
\kappa_{\alpha \beta}(t) = \sum_{\gamma=1}^{N^2}\sum_{\chi=1}^{N^2} m_{\gamma}(t)  \label{eq:coeffl}
\rm{Tr}\left[\sigma_{\beta}\sigma_{\chi}^{\dag} \sigma_{\alpha}^{\dag}\mathcal{M}_{\gamma}(\sigma_{\chi})\right],
\end{equation}
while the Hamiltonian $H(t)=H^{\dag}(t)$ is given by
\begin{equation}\label{eq:htf}
H(t) = \frac{1}{2 i} (\sigma^{\dag}(t) - \sigma(t)), \qquad
\sigma(t) = \frac{1}{\sqrt{N}} \sum^{N^2-1}_{\alpha=1} \kappa_{\alpha N^2}(t) \sigma_{\alpha}.
\end{equation}
In addition, since the matrix with elements $\kappa_{\alpha \beta}(t)$ is Hermitian, it can be diagonalized 
by a unitary matrix $V(t)$ with elements $ V_{\alpha \beta}(t)$,
so that Eq.~\eqref{eq:gorini23} can be rewritten as
\begin{equation}\label{eq:gorinidiag}
\mathcal{K}_t (\omega) = -i \left[ H(t) , \omega \right] + \sum^{N^2-1}_{\alpha=1} r_{\alpha}(t) \left(L_{\alpha}(t) \omega L^{\dag}_{\alpha}(t)
-\frac{1}{2} \left\{L^{\dag}_{\alpha}(t) L_{\alpha}(t) , \omega \right\} \right),
\end{equation}
with
\begin{eqnarray}
r_{\alpha}(t) &=& \sum_{\gamma, \gamma'=1}^{N^2-1} 
V^{\dag}_{\alpha \gamma}(t) \kappa_{\gamma \gamma'}(t)V_{\gamma' \alpha}(t) \qquad r_{\alpha}(t) \in \mathbb{R}                 \nonumber\\
L_{\alpha}(t) &=& \sum_{\beta=1}^{N^2-1} V_{\beta \alpha}(t)\, \sigma_{\beta}. \label{eq:diagop}
\end{eqnarray}
Eqs.~(\ref{eq:coeffl})-(\ref{eq:diagop})
provide us with the wanted recipe to get the Lindbladian form,
starting from the damping-basis representation, Eqs.~(\ref{eq:ktcl}) and (\ref{eq:knz}).
The main difficulty is that the different non-zero eigenvalues $m_{\alpha}(t)$
will “mix" in a non-trivial way, so that there is not a direct connection between them and the 
coefficients $r_{\alpha}(t)$ in the Lindbladian structure.
As a first consequence, one looses the correspondence between the super-operatorial structures of, respectively, the time-local generator
and the memory kernel,
which is guaranteed by Corollary \ref{cor:str} in the case of one non-zero eigenvalue; this is shown by the examples below.\\

\noindent{\textbf{Example 1}} \label{ex:1}
Consider the time-local generator 
\begin{equation}\label{eq:tcl0}
\mathcal{K}^{\tmop{TCL}}_t(\omega) =
\gamma_-(t)\left(\sigma_- \omega \sigma_+ - \frac{1}{2}\left\{\sigma_+ \sigma_-, \omega\right\}\right),
\end{equation}
which describes, for example, the reduced dynamics of a two-level system interacting with a zero-temperature
bosonic bath via a Jaynes-Cummings interaction term \cite{Vacchini2010,Smirne2010}, neglecting for the sake of simplicity the free Hamiltonian term.
This is a special case of the generator treated (for constant coefficients) in \cite{Briegel1993}.
The dual generator is 
\begin{equation}\label{eq:tcl0d}
\left(\mathcal{K}^{\tmop{TCL}}_t\right)'(\omega) =
\gamma_-(t)\left(\sigma_+ \omega \sigma_- - \frac{1}{2}\left\{\sigma_+ \sigma_-, \omega\right\}\right),
\end{equation}
and the resulting eigenvalues and damping bases are [see Eq.~(\ref{eq:eigen})] \footnote{Note that in \cite{Briegel1993} they used a different duality relation, without the Hermitian conjugate, so
that there is a different dual damping basis with respect to ours.}
\begin{eqnarray}
\left\{m^{\tmop{TCL}}_\alpha(t)\right\}_{\alpha=1,\ldots, 4} &=& \left\{0,-\gamma_-(t),-\frac{1}{2}\gamma_-(t),-\frac{1}{2}\gamma_-(t)\right\}\label{eq:valueex2}\\
\left\{\tau_\alpha\right\}_{\alpha=1,\ldots, 4} &=& \left\{\frac{\mathds{1}-\sigma_z}{2},\sigma_z, \sigma_+, \sigma_-\right\} \label{eq:vecex2}\\
\left\{\varsigma_\alpha\right\}_{\alpha=1,\ldots, 4} &=& \left\{\mathds{1},\frac{\mathds{1}+\sigma_z}{2},\sigma_+, \sigma_-\right\}. \label{eq:vecdex2}
\end{eqnarray}
Indeed, the relations in Eq.~(\ref{eq:ortdamp}) are satisfied; moreover, we note that we have now two eigenvalues different from zero
(one two-fold degenerate)
and the damping bases are not made of self-adjoint operators.

The time dependence is enclosed in the eigenvalues only, so that we are in the case of commuting dynamics treated in the previous sections.
In particular, by applying Proposition \ref{prop:str},
we find that the integro-differential generator $\mathcal{K}^{\tmop{NZ}}_t$
has the same damping bases, Eqs.~(\ref{eq:vecex2}) and (\ref{eq:vecdex2}), with eigenvalues given by Eqs.~(\ref{eq:mnz}) and (\ref{eq:ggammat})
with respect to the functions in Eq.~(\ref{eq:valueex2}).
In other terms, both the time-local generator and the memory kernel can be written in the form [see Eq.~(\ref{eq:dampingdec})]
\begin{equation}
    \mathcal{K}^X_t(\omega) = \frac{m^X_2(t)}{2}{\rm{Tr}}\left[(\mathds{1}+\sigma_z)\omega\right] \sigma_z
    +m^X_3(t)\left({\rm{Tr}}\left[\sigma_-\omega\right] \sigma_++{\rm{Tr}}\left[\sigma_+\omega\right] \sigma_-\right),
    \quad X = \tmop{TCL}, \tmop{NZ} \label{eq:auxx}
\end{equation}
with $m^X_i(t)$ the corresponding eigenvalues.
But, by using Eqs.~(\ref{eq:coeffl})-(\ref{eq:diagop}), one can see that the generator as in Eq.~(\ref{eq:auxx}) corresponds
to a Lindbladian form
\begin{equation}
  \mathcal{K}^X_t(\omega) =
-m^X_2(t)\left(\sigma_- \omega \sigma_+ - \frac{1}{2}\left\{\sigma_+ \sigma_-, \omega\right\}\right)  
+\frac{1}{2}(m^X_2(t)- 2m^X_3(t))\left(\sigma_z \omega \sigma_z - \omega\right).\label{eq:ex2fin}
\end{equation}
Crucially, while for $\mathcal{K}^{\tmop{TCL}}_t$ one has $m^{\tmop{TCL}}_2(t)=2m^{\tmop{TCL}}_3(t)$ [see Eq.~(\ref{eq:valueex2})],
so that the pure dephasing term cancels out, this is generally not the case for $\mathcal{K}^{\tmop{NZ}}_t$,
which will then present a pure-dephasing term, in addition to the spontaneous-emission term;
an explicit example of this is given in \cite{Vacchini2010,Smirne2010}. This implies in particular that at variance with the case of the standard GKSL generator, a direct physical interpretation of the operatorial contribution is not available.
\\
As anticipated, even though the time-local master equation we started from, Eq.~(\ref{eq:tcl0}),
is in the form given by Eq.~(\ref{eq:tcll}), now the corresponding GKSL generator has more than one eigenvalue
different from 0. As a consequence, the transformation to the memory kernel no longer preserves the super-operatorial
part of the Lindbladian structure [compare with Corollary \ref{cor:str}], but rather generates one more term.\\

\noindent {\textbf{Example 2}} \label{ex:2}
A somehow opposite example is obtained by starting with a memory kernel of the form 
\begin{equation}
   \mathcal{K}^{\tmop{NZ}}_t(\omega) = k(t) \left(\sigma_- \omega \sigma_+ +\sigma_+ \omega \sigma_-
    -\omega\right),
\end{equation}
with $k(t)$ such that the described dynamics is CPTP, as e.g. in \cite{Vacchini2011}.
First, we note that such a generator can be written in a ``manifest'' Lindbladian structure as
\begin{equation}\label{eq:exnz2}
   \mathcal{K}^{\tmop{NZ}}_t(\omega) = k(t) \left(\sigma_- \omega \sigma_+ -\frac{1}{2}(\sigma_+ \sigma_- \omega + \omega \sigma_+ \sigma_- ) 
   +\sigma_+ \omega \sigma_- -\frac{1}{2}(\sigma_- \sigma_+ \omega + \omega \sigma_- \sigma_+ )
    \right),
\end{equation}
so that this time we start from a non-local generator in the form 
$\mathcal{K}^{\tmop{NZ}}_t = k(t) \mathcal{L}$ [compare with Eq.~(\ref{eq:tcll})].
Such a generator is self-adjoint, $\left(\mathcal{K}^{\tmop{NZ}}_t\right)'=\mathcal{K}^{\tmop{NZ}}_t$, so that the
eigenvalues are real, and the damping bases coincide, yielding an orthonormal basis:
\begin{eqnarray}
\left\{m^{\tmop{NZ}}_\alpha(t)\right\}_{\alpha=1,\ldots, 4} &=& \left\{0,-k(t),-k(t),-2 k(t)\right\}\label{eq:valueex3}\\
\left\{\tau_\alpha\right\}_{\alpha=1,\ldots, 4} &=& \left\{\frac{\mathds{1}}{\sqrt{2}},\frac{\sigma_x}{\sqrt{2}}, \frac{\sigma_y}{\sqrt{2}}, \frac{\sigma_z}{\sqrt{2}}\right\}. \label{eq:vecex3}
\end{eqnarray}
Also in this case, since we have two eigenvalues different from 0 we cannot apply Corollary \ref{cor:str},
which would guarantee that the time-local equation would have the same Lindbladian structure.
Instead, we can apply Proposition \ref{prop:str}, so that both the time-local generator and the memory kernel
have the form
\begin{equation}
  \mathcal{K}^X_t(\omega) =
    m^X_2(t)\left({\rm{Tr}}\left[\sigma_x \omega\right] \sigma_x + {\rm{Tr}}\left[\sigma_y \omega\right] \sigma_y\right)
    +m^X_4(t) {\rm{Tr}}\left[\sigma_z\omega\right] \sigma_z,  \qquad X = \tmop{TCL}, \tmop{NZ}\label{eq:ex3aux}
\end{equation}
where for the memory kernel the eigenvalues are given by Eq.~(\ref{eq:valueex3}),
while for the time-local generator they are obtained from the latter via Eqs.~(\ref{eq:mtcl}) and (\ref{eq:ggammat}).
Using Eqs.~(\ref{eq:coeffl})-(\ref{eq:diagop}), Eq.~(\ref{eq:ex3aux}) can be written in Lindbladian form as
\begin{eqnarray}
  \hspace*{-2,5cm}  \mathcal{K}^X_t(\omega) = -\frac{m^X_4(t)}{2} \left(\sigma_- \omega \sigma_+ -\frac{1}{2}(\sigma_+ \sigma_- \omega + \omega \sigma_+ \sigma_- ) 
   +\sigma_+ \omega \sigma_- -\frac{1}{2}(\sigma_- \sigma_+ \omega + \omega \sigma_- \sigma_+ )
    \right) \nonumber
    \\\hspace*{-0,8cm} 
    +\frac{1}{2}(m^X_4(t)-2m^X_2(t))\left(\sigma_z \omega \sigma_z - \omega\right).
\end{eqnarray}
For the memory kernel $m^{\tmop{NZ}}_4(t)=2m^{\tmop{NZ}}_2(t)$, so that the pure-dephasing term cancels out, while
this will not generally be the case for the time-local generator. An example is given in \cite{Vacchini2011}.
Once again, the multiple eigenvalues in the damping-basis decomposition generate further terms, this time when going from the integro-differential
to the time-local master equation.
Of course, the situation is symmetrical, so that we could obtain further examples by simply inverting the starting points in the examples above;
the only difference when going, respectively, from the time-local to the integro-differential master equation or viceversa is
the connection between the eigenvalues of the damping-basis decomposition, i.e., whether one should use Eq.~(\ref{eq:mnz}) or Eq.~(\ref{eq:mtcl}).

We conclude that Propositions \ref{prop:str} and Eqs.~(\ref{eq:coeffl})-(\ref{eq:diagop}) yield a systematic procedure
to obtain the integro-differential master equation from the time-local one and viceversa, whenever we are able to diagonalize
them and the resulting damping bases are time-independent. An example of application for a higher dimensional system is given in \ref{app:proofhigherdim}.

\section{Redfield-like master equation}\label{sec:Redfield}

Until now, we compared the time-local and integro-differential master equations in an exact way, i.e., without introducing
any approximation. On the other hand, as recalled in the Introduction, it is useful to consider situations
in which an integro-differential master equation is transformed into a time-local one by means of some approximations.
Here, we focus on a master equation which is obtained via a coarse graining in time
analogous to the one introduced by Redfield in \cite{Redfield1957a}.

More precisely, if $\tau_R$ is the relaxation time of the open-system
dynamics and $\mathcal{K}^{\tmop{NZ}}_{t}$ is appreciably different from zero only on a time scale much shorter than $\tau_R$, 
one might approximate the dynamical maps 
$\Lambda_t$ with $\Lambda^{\tmop{Red}}_t$, where the latter is 
obtained by replacing Eq.~\eqref{eq:nz} with
\begin{equation}\label{eq:red}
\frac{\mathd}{\mathd t} \Lambda^{\tmop{Red}}_t 
= \mathcal{K}^{\tmop{Red}}_t \Lambda_t, \quad  \mathcal{K}^{\tmop{Red}}_t
= \int_0^t \mathd \tau \mathcal{K}^{\tmop{NZ}}_{\tau}.
\end{equation}
When this approximation is used in the presence of a weak-coupling interaction between the 
open system and its environment, the resulting equation
is often called Redfield equation \cite{Redfield1957a,Breuer2002}.
We will refer to Eq.~\eqref{eq:red} as
Redfield-like master equation, in order to emphasize that we take it into account without necessarily restricting to the weak-coupling regime. 
The Redfield equation is commonly exploited in several different contexts, such as the study of
transport processes in condensed-matter or biophysical systems \cite{Kondov2001,Dassia2003,Timm2008,Jeske2015,Oviedo2016}.
Importantly, the Redfield equation might lead to a not well-defined evolution, as studied by now extensively in the literature \cite{Benatti2005,Whitney2008,Hartmann2020}. 
We will show how the damping-basis representation enables us to determine the structural properties of the resulting time-local generator
$\mathcal{K}^{\tmop{Red}}_t$, as well as to investigate the Markovian nature of the dynamics $\left\{\Lambda^{\tmop{Red}}_t\right\}_{t\geq0}$ .

\subsection{Structure of the time-local generator}\label{sec:Redfield1}

We first consider the following 
simple connection between the time-local generator, the memory kernel and the Redfield-like generator, 
in the case of commutative dynamics.
\begin{prop}\label{prop:red}
Under the same assumptions of Proposition \ref{prop:str}, for time-local generator and memory kernel in the form Eq.~\eqref{eq:ktcl} and \eqref{eq:knz} respectively,
the Redfield-like generator $\mathcal{K}^{\tmop{Red}}_t$ has the damping-basis diagonalization
\begin{eqnarray}
\mathcal{K}^{\tmop{Red}}_t &=& \sum_{\alpha=1}^{N^2} m^{\tmop{Red}}_{\alpha}(t) \mathcal{M}_{\alpha}, \label{eq:kred}
\end{eqnarray}
with
\begin{eqnarray}
m^{\tmop{Red}}_{\alpha}(t) &=& \int_0^t \mathd \tau m^{\tmop{NZ}}_{\alpha}(\tau) \label{eq:mred}.
\end{eqnarray}
Moreover, under the same assumptions of Corollary \ref{cor:str}, 
the Redfield-like generator $\mathcal{K}^{\tmop{Red}}_t $ reads
\begin{eqnarray}
\mathcal{K}^{\tmop{Red}}_t &=& \frac{m^{\tmop{Red}}(t)}{\ell}\mathcal{L}.\label{eq:corred}
\end{eqnarray}
\end{prop}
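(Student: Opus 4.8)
The plan is to leverage the fact, established in Proposition~\ref{prop:str}, that the memory kernel of a diagonalizable commutative dynamics has the damping-basis decomposition $\mathcal{K}^{\tmop{NZ}}_t = \sum_{\alpha} m^{\tmop{NZ}}_{\alpha}(t)\,\mathcal{M}_{\alpha}$ with the \emph{time-independent} maps $\mathcal{M}_\alpha(\omega) = \rm{Tr}[\varsigma_\alpha^\dag\,\omega]\,\tau_\alpha$. Since the definition of the Redfield-like generator in Eq.~\eqref{eq:red} is $\mathcal{K}^{\tmop{Red}}_t = \int_0^t \mathd\tau\,\mathcal{K}^{\tmop{NZ}}_{\tau}$, and the integral acts only on the scalar prefactors while leaving the operators $\mathcal{M}_\alpha$ untouched, linearity gives immediately
\begin{equation}
\mathcal{K}^{\tmop{Red}}_t = \int_0^t \mathd\tau \sum_{\alpha=1}^{N^2} m^{\tmop{NZ}}_{\alpha}(\tau)\,\mathcal{M}_{\alpha} = \sum_{\alpha=1}^{N^2}\left(\int_0^t \mathd\tau\, m^{\tmop{NZ}}_{\alpha}(\tau)\right)\mathcal{M}_{\alpha},
\end{equation}
which is precisely Eqs.~\eqref{eq:kred}--\eqref{eq:mred}. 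The only thing to check carefully here is that the interchange of the finite sum and the integral is legitimate, which is trivial since the sum has finitely many terms; and that $\mathcal{K}^{\tmop{Red}}_t$ so defined is genuinely the object appearing in Eq.~\eqref{eq:red}, i.e.\ that its damping-basis eigenvalues in the basis $\{\mathcal{M}_\alpha\}$ are uniquely fixed — this follows from the bi-orthogonality relation Eq.~\eqref{eq:ortdamp}, which guarantees $\{\mathcal{M}_\alpha\}$ is a linearly independent set of super-operators, so the coefficients $m^{\tmop{Red}}_\alpha(t)$ are read off uniquely.

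For the second part, under the assumptions of Corollary~\ref{cor:str} the GKSL generator $\mathcal{L}$ has a single non-zero eigenvalue $\ell$ with degeneracy $d$, so that $\mathcal{L} = \ell\sum_{\alpha=1}^d \mathcal{M}_\alpha$ and, by Corollary~\ref{cor:str}, the memory kernel takes the collapsed form $\mathcal{K}^{\tmop{NZ}}_t = (m^{\tmop{NZ}}(t)/\ell)\,\mathcal{L}$, i.e.\ $m^{\tmop{NZ}}_\alpha(t) = m^{\tmop{NZ}}(t)$ for $\alpha = 1,\dots,d$ and $m^{\tmop{NZ}}_\alpha(t)=0$ otherwise. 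Plugging this into Eq.~\eqref{eq:mred} gives $m^{\tmop{Red}}_\alpha(t) = \int_0^t \mathd\tau\, m^{\tmop{NZ}}(\tau) \equiv m^{\tmop{Red}}(t)$ for $\alpha \le d$ and $0$ otherwise, so that $\mathcal{K}^{\tmop{Red}}_t = \sum_{\alpha=1}^d m^{\tmop{Red}}(t)\,\mathcal{M}_\alpha = (m^{\tmop{Red}}(t)/\ell)\,\mathcal{L}$, which is Eq.~\eqref{eq:corred}.

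Honestly, there is no real obstacle: the whole content is that the coarse-graining operation in Eq.~\eqref{eq:red} is a \emph{linear} operation acting purely at the level of the time-dependent eigenvalues, and Proposition~\ref{prop:str} has already done the work of showing the operatorial structure is shared and time-independent. The only point that deserves a sentence of care is the uniqueness of the damping-basis coefficients (so that ``$\mathcal{K}^{\tmop{Red}}_t$ has the damping-basis diagonalization \eqref{eq:kred}'' is a meaningful statement and not just one possible decomposition), which is handled by the bi-orthogonality Eq.~\eqref{eq:ortdamp}. One may also remark in passing that Eq.~\eqref{eq:mred} together with Eq.~\eqref{eq:mnz} expresses $m^{\tmop{Red}}_\alpha(t)$ directly in terms of $\widetilde{G_\alpha}(u)$, closing the triangle in Fig.~\ref{fig:prop:unique}, but this is a corollary rather than part of the proof.
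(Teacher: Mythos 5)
Your argument is correct and coincides with the (essentially immediate) reasoning the paper relies on: the Redfield-like generator is by definition the time integral of the memory kernel, so linearity transfers the integral onto the eigenvalues while the time-independent maps $\mathcal{M}_\alpha$ are untouched, and the single-eigenvalue case follows by specializing as in Corollary~\ref{cor:str}. The paper in fact omits an explicit proof precisely because the statement follows directly from Eq.~\eqref{eq:red} and Proposition~\ref{prop:str}; your added remark on uniqueness of the coefficients via bi-orthogonality is a fine (if not strictly necessary) touch.
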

Accordingly, the Redfield-like dynamics has the same operational structure as the exact one, with time-dependent eigenvalues obtained from the corresponding original ones. This situation is depicted in Fig.~\ref{fig:prop:unique}. 
What is more, the eigenvalues $m^{\tmop{Red}}_{\alpha}(t)$
can be written in a compact form in terms of the functions $G_{\alpha}(t)$ defined in Eq.~(\ref{eq:ggammat}).
\begin{prop}\label{prop:coeff}
Under the same assumptions of Proposition \ref{prop:str}, consider a diagonal time-local generator as in Eq.~(\ref{eq:ktcl}).
The eigenvalues of the Redfield-like time-local generator satisfy the integral equation
\begin{equation}\label{eq:integro}
m^{\tmop{Red}}_{\alpha}(t) =G_{\alpha}(t) 
-\int_0^t \mathd \tau  G_{\alpha}(t-\tau) m^{\tmop{Red}}_{\alpha}(\tau),
\end{equation}
whose solution, provided
\begin{equation}
    |\widetilde{G_{\alpha}}(u)|<1, \label{eq:condpr}
\end{equation} 
can be written as
\begin{equation}\label{eq:recur}
m^{\tmop{Red}}_{\alpha}(t) =\sum_{j=1}^{\infty} (-)^{j+1} \underbrace{G_{\alpha} \ast\dots \ast G_{\alpha}}_{\tmop{j\, times}}(t).
\end{equation}
\end{prop}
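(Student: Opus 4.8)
The plan is to work entirely in the Laplace domain, where convolutions become products and the relations collected in Proposition~\ref{prop:str} take an algebraic form. First I would invoke Proposition~\ref{prop:red}, which gives $m^{\tmop{Red}}_{\alpha}(t)=\int_0^t \mathd\tau\, m^{\tmop{NZ}}_{\alpha}(\tau)$; taking Laplace transforms and using the standard rule for the transform of a primitive, this becomes $\widetilde{m^{\tmop{Red}}_{\alpha}}(u)=\widetilde{m^{\tmop{NZ}}_{\alpha}}(u)/u$. Next I would substitute the expression for $\widetilde{m^{\tmop{NZ}}_{\alpha}}(u)$ read off from Eq.~\eqref{eq:mnz}, namely $\widetilde{m^{\tmop{NZ}}_{\alpha}}(u)=u\,\widetilde{G_{\alpha}}(u)/(1+\widetilde{G_{\alpha}}(u))$, which yields the compact identity
\[
\widetilde{m^{\tmop{Red}}_{\alpha}}(u)=\frac{\widetilde{G_{\alpha}}(u)}{1+\widetilde{G_{\alpha}}(u)}.
\]

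From here both claims follow by elementary algebra. For the integral equation~\eqref{eq:integro}, I would multiply the last display by $1+\widetilde{G_{\alpha}}(u)$ to obtain $\widetilde{m^{\tmop{Red}}_{\alpha}}(u)=\widetilde{G_{\alpha}}(u)-\widetilde{G_{\alpha}}(u)\,\widetilde{m^{\tmop{Red}}_{\alpha}}(u)$, and then take the inverse Laplace transform, recognising the product $\widetilde{G_{\alpha}}(u)\,\widetilde{m^{\tmop{Red}}_{\alpha}}(u)$ as the transform of the convolution $(G_{\alpha}\ast m^{\tmop{Red}}_{\alpha})_t$; this reproduces Eq.~\eqref{eq:integro}. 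For the series~\eqref{eq:recur}, I would observe that under the hypothesis~\eqref{eq:condpr}, $|\widetilde{G_{\alpha}}(u)|<1$, the geometric series $\sum_{j\geq 1}(-1)^{j+1}\widetilde{G_{\alpha}}(u)^{j}$ converges to $\widetilde{G_{\alpha}}(u)/(1+\widetilde{G_{\alpha}}(u))=\widetilde{m^{\tmop{Red}}_{\alpha}}(u)$; inverting term by term and using that $\widetilde{G_{\alpha}}(u)^{j}$ is the Laplace transform of the $j$-fold self-convolution $\underbrace{G_{\alpha}\ast\dots\ast G_{\alpha}}_{j\text{ times}}$ gives Eq.~\eqref{eq:recur}.

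The only real subtlety is the justification of the term-by-term inversion, i.e. the interchange of the inverse transform $\mathfrak{I}$ with the infinite sum, and it is the point I would spell out with some care. One transparent way to do this is to read Eq.~\eqref{eq:integro} directly as a Volterra equation of the second kind and identify Eq.~\eqref{eq:recur} with its Neumann (Picard) iteration: setting $m^{(0)}_{\alpha}=G_{\alpha}$ and $m^{(n)}_{\alpha}=G_{\alpha}-G_{\alpha}\ast m^{(n-1)}_{\alpha}$, one checks by induction that $m^{(n)}_{\alpha}=\sum_{j=1}^{n+1}(-1)^{j+1}\,\underbrace{G_{\alpha}\ast\dots\ast G_{\alpha}}_{j\text{ times}}$, and the bound~\eqref{eq:condpr} (equivalently, a suitable exponential bound on $G_{\alpha}$) guarantees that this sequence converges — locally uniformly in $t$ — to the unique solution, which by the computation above is $m^{\tmop{Red}}_{\alpha}$. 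I expect this convergence/interchange argument to be the main technical ingredient; everything else is a short Laplace-domain manipulation built on Propositions~\ref{prop:str} and~\ref{prop:red}, and I would present the Laplace-domain version for consistency with the rest of the section while simply remarking on the convergence point.
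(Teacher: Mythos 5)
Your proposal is correct and follows essentially the same route as the paper's own proof: pass to the Laplace domain via Proposition~\ref{prop:red} and Eq.~\eqref{eq:mnz} to get $\widetilde{m^{\tmop{Red}}_{\alpha}}(u)=\widetilde{G_{\alpha}}(u)/(1+\widetilde{G_{\alpha}}(u))$, clear the denominator and invert to obtain Eq.~\eqref{eq:integro}, then expand as a geometric series under Eq.~\eqref{eq:condpr} to obtain Eq.~\eqref{eq:recur}. Your additional remarks on justifying the term-by-term inversion via the Neumann iteration of the Volterra equation are a welcome refinement of a point the paper passes over quickly, but they do not change the argument.
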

\begin{proof} The proof is given in \ref{app:proofp3}.
\end{proof}

\subsection{Markovianity of the dynamics}\label{sec:Redfield2}
As mentioned in the Introduction, a highly relevant property of the dynamics of an open quantum system is its \mbox{(non-)Markovianity}. Quantum non-Markovianity is often defined in terms of the divisibility property, positive or completely positive, of the corresponding dynamical maps. 
As we will see below, the damping-basis representation enables us to make some statements about the preservation of (C)P-divisibility under 
the approximation leading to the Redfield-like master equation.

First, we note that the relation in Eq.~(\ref{eq:recur}) directly allows us to infer that (C)P-divisibility
is preserved in the Redfield-like master equation, whenever $\mathcal{L}$ has only one non-zero eigenvalue.
\begin{corollary}\label{cor:cpdiv}
Under the same assumptions of Proposition \ref{prop:coeff}, consider a time-local generator of the form $\mathcal{K}^{\tmop{TCL}}_t = \gamma(t) \mathcal{L}$,
such that $\mathcal{L}$ has only one eigenvalue $\ell$ different from zero, so that $\mathcal{L} = \ell \sum_{\alpha=1}^d \mathcal{M}_{\alpha}$,
and assume Eq.~(\ref{eq:condpr}) holds.
If $\left\{\Lambda_t\right\}_{t \geq0}$ is (C)P-divisible,
then $\left\{\Lambda^{\tmop{Red}}_t\right\}_{t \geq0}$ is (C)P-divisible.
\end{corollary}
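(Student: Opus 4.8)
The plan is to exploit the structural simplification provided by Corollary~\ref{cor:str} and Proposition~\ref{prop:red}: when $\mathcal{L}$ has a single non-zero eigenvalue $\ell$, the time-local generator, the memory kernel and the Redfield-like generator are all proportional to the \emph{same} GKSL generator $\mathcal{L}$, with scalar prefactors $\gamma(t)$, $m^{\tmop{NZ}}(t)/\ell$ and $m^{\tmop{Red}}(t)/\ell$ respectively. Since for a time-local generator of the form $\mathcal{K}^{\tmop{TCL}}_t=\gamma(t)\mathcal{L}$ the dynamical maps are $\Lambda_t=\exp\!\left(\mathcal{L}\int_0^t\mathd\tau\,\gamma(\tau)\right)$ and the propagators $\Lambda_{t,s}=\Lambda_t\Lambda_s^{-1}=\exp\!\left(\mathcal{L}\int_s^t\mathd\tau\,\gamma(\tau)\right)$, the (C)P-divisibility of $\left\{\Lambda_t\right\}$ is equivalent to $\gamma(t)\geq 0$ for almost all $t$ (in the CP case by Lemma~2.3 of \cite{Gorini1976}, since then the single rate $\ell\gamma(t)$ — up to an overall sign fixed by $\mathcal L$ being a legitimate GKSL generator — must be non-negative; in the P case by the analogous positive-semigroup argument of \cite{Kossakowski1972a,Breuer2016a}). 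The same statement holds verbatim for the Redfield dynamics with $\gamma$ replaced by $m^{\tmop{Red}}(t)/\ell$, because $\mathcal{K}^{\tmop{Red}}_t=\bigl(m^{\tmop{Red}}(t)/\ell\bigr)\mathcal{L}$ generates $\Lambda^{\tmop{Red}}_t$ through Eq.~\eqref{eq:red} in exactly the same way.

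Hence the entire claim reduces to showing: if $\gamma(t)\geq 0$ then $m^{\tmop{Red}}(t)\geq 0$. Here is where Proposition~\ref{prop:coeff} enters. In the single-eigenvalue case $m^{\tmop{TCL}}(t)=\ell\gamma(t)$, so $G(t)=\frac{\mathd}{\mathd t}e^{\ell\int_0^t\mathd\tau\,\gamma(\tau)}=\ell\gamma(t)\,e^{\ell\int_0^t\mathd\tau\,\gamma(\tau)}$. The key observation is that $G(t)$ and $m^{\tmop{Red}}(t)$ differ from $\gamma(t)$ and its Redfield counterpart only by the positive factor $\ell$-scaling, and — crucially — $e^{\ell\int_0^t\mathd\tau\,\gamma(\tau)}>0$ always, so $G(t)$ has the same sign as $\ell\gamma(t)$. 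Under the hypotheses of Proposition~\ref{prop:coeff} and condition Eq.~\eqref{eq:condpr}, $m^{\tmop{Red}}(t)$ admits the series representation Eq.~\eqref{eq:recur}, $m^{\tmop{Red}}(t)=\sum_{j\geq1}(-)^{j+1}\underbrace{G\ast\cdots\ast G}_{j}(t)$. I would then argue that this alternating series of iterated convolutions, with a single non-negative $G$, has a non-negative sum. The cleanest route is \emph{not} to attack the series term-by-term (partial sums of alternating series need not stay non-negative) but to return to the integral equation Eq.~\eqref{eq:integro}, $m^{\tmop{Red}}(t)=G(t)-\int_0^t\mathd\tau\,G(t-\tau)m^{\tmop{Red}}(\tau)$, and invoke a standard Volterra/renewal-equation positivity argument: rewrite it as $m^{\tmop{Red}}=G-G\ast m^{\tmop{Red}}$, whose resolvent kernel $R$ solves $R=G-G\ast R$ as well, so $m^{\tmop{Red}}=R$; and positivity of $R$ for non-negative $G$ follows because $R$ is the unique fixed point and the map $f\mapsto G-G\ast f$ restricted to the appropriate space is a contraction whose iterates started from $0$ converge, with the Neumann-series resummed in the form $R(t)=\sum_{j\geq1}(-)^{j+1}G^{\ast j}(t)$ — equivalently, taking Laplace transforms, $\widetilde{m^{\tmop{Red}}}(u)=\widetilde{G}(u)/(1+\widetilde{G}(u))$, and one checks that for $G\geq 0$ this is the Laplace transform of a non-negative function (e.g.\ via complete monotonicity / the renewal-theoretic identity $1/(1+\widetilde G)=\sum_{j\geq0}(-\widetilde G)^j$ interpreted as a signed-measure renewal density that is nevertheless non-negative because $G$ itself is, a Bernstein-type fact).

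The main obstacle, and the step requiring genuine care rather than bookkeeping, is precisely this last positivity claim: the Neumann series Eq.~\eqref{eq:recur} is \emph{alternating}, so non-negativity of $G$ does not immediately give non-negativity of the sum, and one must supply a real argument. I would present it via the resolvent/renewal-equation route sketched above — showing $m^{\tmop{Red}}=R$ where $R\geq 0$ is the resolvent kernel of the non-negative Volterra kernel $G$ — since that is the standard and rigorous way to conclude that a Volterra equation $m^{\tmop{Red}}=G-G\ast m^{\tmop{Red}}$ with $G\geq0$ has a non-negative solution (this is exactly the structure of a defective renewal equation, whose solution density is manifestly non-negative). Everything else is routine: the reduction to scalars via Corollary~\ref{cor:str} and Proposition~\ref{prop:red}, the characterization of (C)P-divisibility of $\Lambda_t=e^{\mathcal L\int_0^t\gamma}$ in terms of the sign of $\gamma$ using Lemma~2.3 of \cite{Gorini1976} and the positive-semigroup result \cite{Kossakowski1972a}, and the transfer of the same characterization to $\Lambda^{\tmop{Red}}_t$.
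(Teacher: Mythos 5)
Your reduction to scalar rates and the characterization of (C)P-divisibility via the signs of $\gamma(t)$ and $\gamma^{\tmop{Red}}(t)=m^{\tmop{Red}}(t)/\ell$ match the paper's proof, but the core positivity step contains a genuine error. Since $\mathcal{L}$ is a GKSL generator whose only non-zero eigenvalue is $\ell$, that eigenvalue is real and \emph{negative}; hence, when $\gamma(t)\geq 0$, the function $G(t)=\ell\,\gamma(t)\,e^{\ell\int_0^t\mathd\tau\,\gamma(\tau)}$ of Eq.~\eqref{eq:mnz2} is non-\emph{positive}, not non-negative as you assume (your phrase ``positive factor $\ell$-scaling'' is where the sign is lost, even though you earlier gestured at an overall sign fixed by $\mathcal{L}$). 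With the correct sign the difficulty you spend the whole second half on simply evaporates: writing $G=-H$ with $H\geq 0$, each term of Eq.~\eqref{eq:recur} equals $(-)^{j+1}(-1)^{j}H^{\ast j}(t)=-H^{\ast j}(t)\leq 0$, so the series has a single sign, $m^{\tmop{Red}}(t)\leq 0$, and $\gamma^{\tmop{Red}}(t)=m^{\tmop{Red}}(t)/\ell\geq 0$. This term-by-term observation is exactly the paper's argument; no renewal-theoretic machinery is needed.

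Moreover, the lemma you invoke to handle the (spurious) case $G\geq 0$ is false: for a non-negative Volterra kernel the solution of $m=G-G\ast m$ (note the minus sign, Eq.~\eqref{eq:integro}) need not be non-negative, and $\widetilde{G}(u)/(1+\widetilde{G}(u))$ need not be the Laplace transform of a non-negative function — there is no ``Bernstein-type fact'' here without additional hypotheses such as complete monotonicity. A counterexample compatible with Eq.~\eqref{eq:condpr}: take $G\geq0$ a small bump supported near $t=1$ with total mass $c<1$; then near $t=2$ one has $G(t)=0$ while $G^{\ast 2}(t)>0$, so $m(t)\approx -G^{\ast 2}(t)<0$ there. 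The classical renewal-positivity statement applies to $R=G+G\ast R$, not to the equation with the minus sign; it becomes applicable in the present problem only after the substitution $G=-H$, i.e.\ only once the correct (negative) sign of $G$ is recognized — at which point it reduces to the same one-line conclusion as above.
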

\begin{proof}
First, we note that the time-local generators, $\mathcal{K}^{\tmop{TCL}}_t$
and $\mathcal{K}^{\tmop{Red}}_t$, are of the form 
$$
\mathcal{K}^{\tmop{TCL}}_t =\gamma(t) \mathcal{L}, \qquad 
\mathcal{K}^{\tmop{Red}}_t =\gamma^{\tmop{Red}}(t) \mathcal{L},
$$
where $\gamma^{\tmop{Red}}(t)=\frac{m^{\tmop{Red}}(t)}{l}$.
The dynamics $\left\{\Lambda_t\right\}_{t \geq0}$ 
($\left\{\Lambda^{\tmop{Red}}_t\right\}_{t \geq0}$) is (C)P-divisible if and only if
$\gamma(t)\geq 0$ ($\gamma^{\tmop{Red}}(t)\geq 0$).
Thus, since we assume $\left\{\Lambda_t\right\}_{t \geq0}$ (C)P-divisible, we
have $\gamma(t)\geq 0$. Moreover, $\ell$ has to be real and negative, since it is the only non-zero
eigenvalue of $\mathcal{L}$.
It follows that the function $G(t)$, see Eq.~(\ref{eq:mnz2}),
$$
G(t) = \ell \gamma(t) e^{\ell \Gamma(t)}
$$
is negative. But then, from Eq.~(\ref{eq:recur}) we have that $m^{\tmop{Red}}(t)/\ell$ is positive and thus  
$\left\{\Lambda^{\tmop{Red}}_t\right\}_{t \geq0}$ is (C)P-divisible.
\end{proof}
Note that the reverse statement, that is that  (C)P-divisibility of the Redfield-like dynamics originates from (C)P-divisibility of the original dynamics, is in general not true (even in the case of a single non-zero eigenvalue). Indeed, in the next example we will consider two dynamics, which both have a (C)P-divisible approximated evolution, but only one of them has this property originally.\\

\noindent{\textbf{Example 3}} \refstepcounter{dummy}\label{ex:3}
Let us consider the case of a pure dephasing
dynamics for a two-level system, characterized by a monotonic reduction of
coherences described by the decoherence funtion $\varphi (t)$. Starting from the expression of the dynamics we can
easily work out the corresponding time-local generator, which takes the form
\begin{equation}
  \mathcal{K}_t^{\tmop{TCL}} = \gamma (t) \mathcal{L},  \label{eq:3tcl}
\end{equation}
with
\begin{equation}
  \gamma (t) = - \frac{\dot{\varphi} (t)}{\varphi (t)} 
\end{equation}
and
\begin{equation}
  \mathcal{L} = \sigma_+ \sigma_- \cdot \sigma_+ \sigma_-  + \sigma_- \sigma_+ \cdot \sigma_- \sigma_+  - \mathds{1}. 
\end{equation}
It can be easily checked that $\mathcal{L}$
has only one non-zero two-fold degenerate eigenvalue, so that relying on Corollary \ref{cor:str} we get the corresponding memory kernel,
\begin{equation}
  \mathcal{K}_t^{\tmop{NZ}} = k (t) \mathcal{L},  \label{eq:3nz}
\end{equation}
with
\begin{equation}
  \tilde{k} (u)  =  - \frac{\widetilde{\dot{\varphi}} (u)}{\tilde{\varphi}
  (u)} . 
\end{equation}
Since Eq.~(\ref{eq:3nz}) exactly corresponds to Eq.~(\ref{eq:3tcl}), this
memory kernel describes the very same CPTP dynamics. In particular, it
describes a CP-divisible dynamics since $\varphi (t)$ is monotonically
decreasing. According to Proposition \ref{prop:red}, the Redfield-like
generator is given by
\begin{equation}
  \mathcal{K}_t^{\tmop{Red}} = \int^t_0 \mathd \tau  k (\tau)
  \mathcal{L},  \label{eq:3red}
\end{equation}
where the function $K (t) = \int^t_0 \mathd \tau  k (\tau)$ is indeed
positive since $\tilde{K} (u)= -{\widetilde{\dot{\varphi}} (u)}/({1 +
  \widetilde{\dot{\varphi}} (u)}) $
and $\varphi (t)$ is monotonically decreasing. The Redfield-like generator
therefore also describes a CP-divisible dynamics.
Let us now start from
Eq.~(\ref{eq:3nz}) changing the operatorial structure in the memory kernel via
the replacement $\mathcal{L} \rightarrow \overline{\mathcal{L}}$ with
\begin{equation}
  \overline{\mathcal{L}} = \mathcal{} \sigma_z \cdot \mathcal{} \sigma_z -
  \mathds{1}, 
\end{equation}
so that we define
\begin{equation}
  \overline{\mathcal{K}}_t^{\tmop{NZ}} = k (t) \overline{\mathcal{L}} . 
  \label{eq:3nzbar}
\end{equation}
The associated Redfield-like generator reads
\begin{equation}
  \overline{\mathcal{K}}_t^{\tmop{Red}} = \int^t_0 \mathd \tau  k
  (\tau) \overline{\mathcal{L}},  \label{eq:3redbar}
\end{equation}
and again corresponds to a CP-divisible dynamics. Making reference to
Corollary \ref{cor:str} (also $\overline{\mathcal{L}}$ has only one non-zero eigenvalue), we can obtain via the damping-basis approach the
time-local generator exactly corresponding to Eq.~(\ref{eq:3nzbar}), namely
\begin{equation}
  \overline{\mathcal{K}}_t^{\tmop{TCL}} =\bar{\gamma} (t)
  \overline{\mathcal{L}},  \label{eq:3tclbar}
\end{equation}
where now
\begin{equation}
\label{eq:3}
\bar{\gamma} (t) = - \frac{1}{2} \frac{1}{| \bar{\varphi} (t) |}
   \frac{\mathd}{\tmop{dt}} | \bar{\varphi} (t) |,
\end{equation}
 with
 \begin{equation}
   \label{eq:4}
   \bar{\varphi} (t)= \mathfrak{I}\left(\frac{1}{u}\frac{1+\tilde{\dot{\varphi}}(u)}{1-\tilde{\dot{\varphi}}(u)}\right)(t).
 \end{equation}
The sign of $\bar{\gamma} (t)$ is now not necessarily fixed, so that the dynamics is in general
not C(P)-divisible. The evolution is however always well defined, that is
CPTP, since $\int^t_0 \mathd \tau \bar{\gamma}  (\tau) \geqslant 0$
{\cite{Breuer2016a}}. The same therefore holds for the evolution described by
$\overline{\mathcal{K}}_t^{\tmop{NZ}}$, due to the exact correspondence
between Eq.~(\ref{eq:3tclbar}) and Eq.~(\ref{eq:3nzbar}). We have therefore shown an example of
two dynamics whose associated Redfield-like master equations both describe a
CP-divisible collection of maps, despite the fact that only one of the two was
originally CP-divisible, while the other can even break P-divisibility.\\
If we go beyond the case of only one non-zero eigenvalue, the situation gets more involved.
Focusing in particular on Pauli maps, one can see that P-divisibility is more robust under
the Redfield approximation than CP-divisibility.
\begin{corollary}\label{cor:pdivPauli}
  Consider a Pauli dynamical map in the form
\begin{equation}\label{eq:pauliTCL}
\frac{\mathd}{\tmop{dt}} \rho ( t )= \frac{1}{2}\sum\limits_{k=1}^{3} \gamma_k(t)(\sigma_k\rho_t\sigma_k - \rho_t).
\end{equation}
Assume that the dynamical map is P-divisibile and Eq.~\eqref{eq:condpr} holds. Then the associated Redfield-like map is also P-divisible.
\end{corollary}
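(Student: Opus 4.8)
The plan is to reduce the assertion to a sign analysis of the damping-basis eigenvalues, using that for Pauli maps P-divisibility has an elementary characterisation in terms of them. In the time-independent basis $\{\mathds{1}/\sqrt{2},\sigma_1/\sqrt{2},\sigma_2/\sqrt{2},\sigma_3/\sqrt{2}\}$ the generator in Eq.~\eqref{eq:pauliTCL} is diagonal, with $m^{\tmop{TCL}}_0(t)=0$ and $m^{\tmop{TCL}}_k(t)=-(\gamma_i(t)+\gamma_j(t))$ for $\{i,j,k\}=\{1,2,3\}$, so we are inside the commutative setting of Proposition~\ref{prop:str}. The associated dynamical map has eigenvalues $m_k(t)=e^{\int_0^t\mathd\tau\,m^{\tmop{TCL}}_k(\tau)}$, which are strictly positive and hence never vanish; thus the $\Lambda_t$ are invertible and every propagator $V_{t,s}=\Lambda_t\Lambda_s^{-1}$ ($t\geq s$) is the Pauli-diagonal map that scales the Bloch vector componentwise by $m_k(t)/m_k(s)>0$. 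Such a map is positive precisely when each factor is $\leq1$, so, requiring this for all $t\geq s\geq0$, the family $\{\Lambda_t\}_{t\geq0}$ is P-divisible if and only if $m_1,m_2,m_3$ are non-increasing. Recalling from Eqs.~\eqref{eq:tcl-map1} and \eqref{eq:ggammat} that $G_k(t)=\frac{\mathd}{\mathd t}m_k(t)$, this means that $\{\Lambda_t\}_{t\geq0}$ is P-divisible exactly when $G_k(t)\leq0$ for $k=1,2,3$ (equivalently $m^{\tmop{TCL}}_k(t)\leq0$, i.e.\ $\gamma_i(t)+\gamma_j(t)\geq0$ for all $i\neq j$, the familiar criterion, cf.~\cite{Breuer2016a,Kossakowski1972a}).

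Under the hypothesis I would therefore have $G_k(t)\leq0$ for $k=1,2,3$. By Proposition~\ref{prop:red} the Redfield-like generator is diagonal in the very same damping basis, with $m^{\tmop{Red}}_0(t)=0$ and, invoking Proposition~\ref{prop:coeff} under the assumption~\eqref{eq:condpr}, $m^{\tmop{Red}}_k(t)=\sum_{j=1}^{\infty}(-)^{j+1}\underbrace{G_k \ast\dots\ast G_k}_{\tmop{j\, times}}(t)$ for $k=1,2,3$. The key observation is that, setting $h_k:=-G_k\geq0$, the $j$-fold convolution $h_k^{\ast j}$ is non-negative, being a convolution of non-negative functions, so that $(-)^{j+1}G_k^{\ast j}=-h_k^{\ast j}\leq0$ for every $j$. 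Hence all partial sums are non-positive and, since the series converges by Proposition~\ref{prop:coeff}, $m^{\tmop{Red}}_k(t)\leq0$ for $k=1,2,3$ and all $t\geq0$.

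To conclude, I note that $\mathcal{K}^{\tmop{Red}}_t$ is again of the Pauli form~\eqref{eq:pauliTCL}, with rates $\gamma^{\tmop{Red}}_k(t)=-\frac{1}{2}\bigl(m^{\tmop{Red}}_i(t)+m^{\tmop{Red}}_j(t)-m^{\tmop{Red}}_k(t)\bigr)$, and it generates the family $\{\Lambda^{\tmop{Red}}_t\}_{t\geq0}$, whose damping-basis eigenvalues $e^{\int_0^t\mathd\tau\,m^{\tmop{Red}}_k(\tau)}$ are again strictly positive and non-increasing for $k=1,2,3$; the criterion above, now read off for $\{\Lambda^{\tmop{Red}}_t\}_{t\geq0}$, gives its P-divisibility. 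The computation decouples completely over the index $k$, so it is really the single-non-zero-eigenvalue argument of Corollary~\ref{cor:cpdiv} run three times in parallel, the new input being that for Pauli maps the P-divisibility condition splits into the three independent inequalities $m^{\tmop{TCL}}_k\leq0$. The step that needs the most care is the Neumann series of Proposition~\ref{prop:coeff}: one has to check that the iterated convolutions are well defined ($G_k$ is locally integrable because $m_k$ is differentiable) and that term-by-term non-positivity is inherited by the convergent sum; the P-divisibility characterisation for Pauli maps, though standard, is where the genuine content of the statement lies.
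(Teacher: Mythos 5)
Your proof is correct and follows essentially the same route as the paper's: reduce P-divisibility to the sign condition $m^{\tmop{TCL}}_k(t)\leq 0$, deduce $G_k(t)\leq 0$, and conclude $m^{\tmop{Red}}_k(t)\leq 0$ term by term from the Neumann series of Proposition~\ref{prop:coeff}. The only difference is presentational — you re-derive the Pauli-map P-divisibility criterion via the componentwise Bloch-vector contraction of the propagators, whereas the paper simply cites it, and you spell out the sign of the iterated convolutions explicitly.
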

\begin{proof}
For Pauli dynamical maps the eigenvalues of the time-local generator read: $m^{TCL}_k(t)=-(\gamma_i(t)+\gamma_j(t))$, $k \neq i \neq j$. A map in such a form is P-divisible iff
\cite{chruscinski,chruscinski2}
\begin{equation}\label{eq:gg}
  \gamma_i(t) + \gamma_j(t) \geq 0 \ , \ \ i\neq j  \Leftrightarrow m^{TCL}_\alpha(t)\leq 0  \ , \forall \alpha .
\end{equation}
Accordingly, $G_\alpha(t) \leq 0 , \forall \alpha$. Using this result, from Eq.~\eqref{eq:recur} we conclude that $m^{\tmop{Red}}_\alpha(t)\leq 0, \forall \alpha$, and the Redfield-like map is then also P-divisible. 
\end{proof}
On the other hand, CP-divisibility can be lost in the approximation leading to the Redfield-like master equation, as shown in
the following example; indeed, this is a consequence of the presence of more than one non-zero eigenvalues.\\

\noindent {\textbf{Example 4}}\refstepcounter{dummy} \label{ex:4}
We analyze the dephasing dynamics in random directions first introduced in \cite{moj-nonMarkovianity}:
\begin{equation}
  \label{eq:1}
  \Lambda_t = x_1 e^{\mathcal{L}_1 t}+x_2 e^{\mathcal{L}_2 t}+x_3 e^{\mathcal{L}_3 t},
\end{equation}
where $\mathcal{L}_i[\rho]=\sigma_i\rho \sigma_i-\rho$, describing dephasing along the $i$-th direction, and the $x_i$'s are the corresponding probabilities constrained by $x_1+x_2+x_3 = 1$.
The time evolution obeys the following time-local master equation
\begin{equation}
\frac{\mathd}{\tmop{dt}} \rho ( t )=\frac{1}{2}\sum\limits_{k=1}^{3} \gamma_k(t)(\sigma_k\rho(t)\sigma_k - \rho(t)) \label{ru-meq},
\end{equation}
where the time dependent decoherence rates can be expressed as
\begin{eqnarray}
  \gamma_1(t)  &=& \mu_1(t) - \mu_2(t) - \mu_3(t) \nonumber \\
  \gamma_2(t) &=&  -\mu_1(t) + \mu_2(t) - \mu_3(t) \label{gammas3} \\
   \gamma_3(t) &=& -\mu_1(t) - \mu_2(t) + \mu_3(t), \nonumber
\end{eqnarray}
with
\begin{align}\label{}
  \mu_1(t) = - \frac{x_2+x_3}{x_2+x_3 + e^{2t} x_1} &&
  \mu_2(t) = -  \frac{x_3 +x_1}{x_3+x_1 + e^{2t} x_2} \nonumber \\
	\mu_3(t) = - \frac{x_1+x_2}{x_1+x_2 + e^{2t} x_3}. \nonumber
\end{align}
The dynamics given by Eq.~\eqref{ru-meq} is always P-divisible, since Eqs.~\eqref{eq:gg} are satisfied. However, depending on the choice of the $x_i$ parameters one of the decay rates can become negative so that the evolution looses its CP-divisibility, see Fig.~\ref{fig:LoosingCPdiv} \textbf{(a)}. In particular, the well-known eternally non-Markovian master equation belongs to this family  \cite{hallcresserandersson}. It corresponds to the choice $x_1=x_2=\frac{1}{2}$, $x_3=0$ and generates an evolution which for all $t>0$ is non-CP-divisible, as in this case $\gamma_3(t)=-\tanh(t)<0, \forall t>0$.
\begin{figure}[!tbp]
  \centering
  \begin{minipage}[b]{6cm}
    \includegraphics[width=5.8cm]{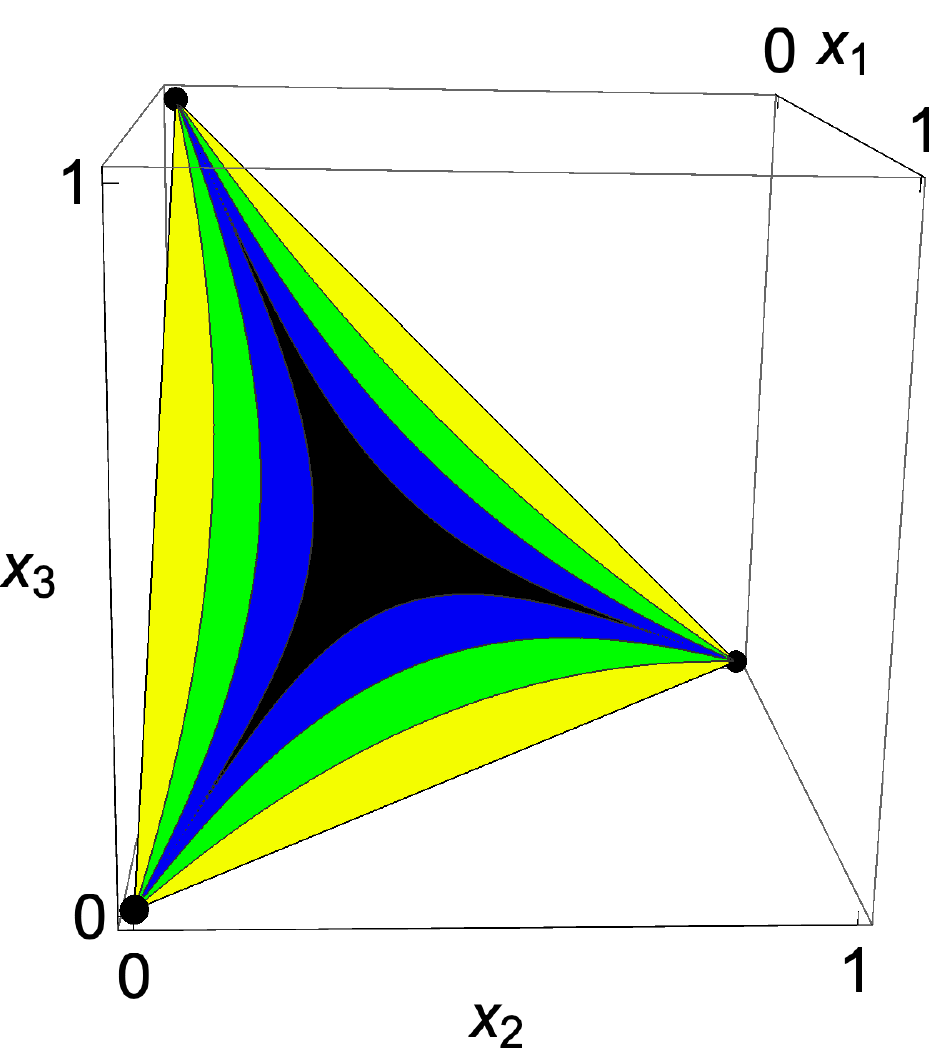}
    \hspace*{1.2cm}\textbf{a)}\textbf{ Exact dynamics}\\
  \end{minipage}
  \hfill
  \begin{minipage}[b]{6cm}
    \includegraphics[width=5.8cm]{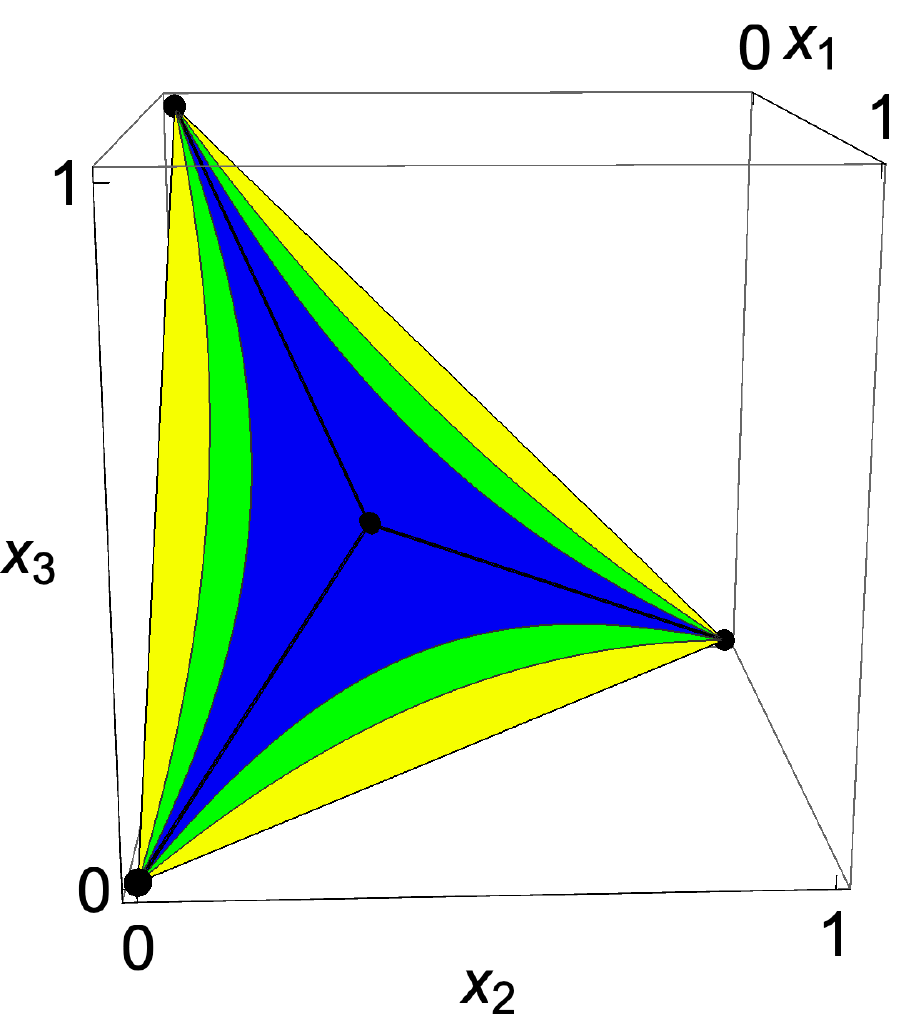}
    \hspace*{0.1cm} \textbf{b)}\textbf{ Approximated dynamics}\\
  \end{minipage}
    \hfill
  \begin{minipage}[b]{1cm}
    \includegraphics[width=1cm]{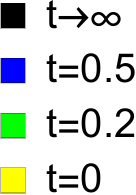}
    \vspace*{3.cm}
  \end{minipage}
  \caption{The areas depict the range of parameters $x_1,x_2,x_3$, for which all of the $\gamma_k(t)$ (panel \textbf{a)} the exact dynamics) and $\gamma^{\tmop{Red}}_k(t)$ (panel \textbf{b)} the approximated dynamics) are positive, that is the dynamics is CP-divisible, up to the given time; different colours of the areas correspond to different times (in arbitrary units). Initially ($t=0$, yellow triangle) the only constraint is in both cases the condition $x_1+x_2+x_3=1$.
    For long enough times $\gamma^{\tmop{Red}}_k(t)>0$ implies $\gamma_k(t)>0$
    . The black color corresponds to $t\rightarrow \infty$, so that for these choices of $x_1,x_2,x_3$ the corresponding dynamical map always remains CP-divisible. Only the choices $x_i=x_j\leq x_k$, with $i \neq j \neq k$ lead to CP-divisible Redfield-like dynamics (panel \textbf{b)} tripod configuration consisting of three lines), in contrast to the exact dynamics (panel \textbf{a)} black star shape). The four black dots denote choices of parameters $x_1, x_2, x_3$ for which both the original and the approximated generator have only one non-zero eigenvalue and the corresponding dynamical map is CP-divisible (in panel \textbf{a)} the dot in the middle is not visible).
  }\label{fig:LoosingCPdiv}
\end{figure}
The evolution resulting from the Redfield-like master equation Eq.~\eqref{eq:red} has 
the form
\begin{equation}
\frac{\mathd}{\tmop{dt}} \rho ( t )=\frac{1}{2}\sum\limits_{k=1}^{3} \gamma^{\tmop{Red}}_k(t)(\sigma_k\rho(t)\sigma_k - \rho(t)) \label{approx-eq},
\end{equation}
with
\begin{align}
\gamma^{\tmop{Red}}_1(t)&=Y_1(t)-Y_2(t)-Y_3(t),\\
\gamma^{\tmop{Red}}_2(t)&=-Y_1(t)+Y_2(t)-Y_3(t),\\
\gamma^{\tmop{Red}}_3(t)&=-Y_1(t)-Y_2(t)+Y_3(t),
\end{align}
and $Y_k(t)=\exp(-2x_k t)(x_k-1)$. This is still a proper CPTP quantum dynamics, as it fulfils the relevant constraints given in \cite{chruscinski}.
The dynamics \eqref{approx-eq} is P-divisible iff the condition \eqref{eq:gg} for $\gamma^{\tmop{Red}}_k(t)$ is satisfied. As
\begin{align}
\gamma^{\tmop{Red}}_i(t)+\gamma^{\tmop{Red}}_j(t)= 2(1-x_k)e^{-2x_k t}, \qquad i \neq j \neq k,
\end{align}
is always positive, by conducting our approximation the dynamics keeps its P-divisibility (in accordance with Corollary \ref{cor:pdivPauli}).
However, almost all of the dynamics which were CP-divisible for the original map, are no more CP-divisible for the Redfield-like master equation, see Fig.~\ref{fig:LoosingCPdiv}\textbf{ (a)}, for the original dynamics and Fig.~\ref{fig:LoosingCPdiv}\textbf{ (b)}, for the approximated evolution. Initially, for $t=0$, all $\gamma(t)$'s and $\gamma^{\tmop{Red}}(t)$'s are positive. 
After a transient behavior, positivity of $\gamma^{\tmop{Red}}(t)$ implies positivity of $\gamma(t)$, since the parameter range warranting positivity of the time dependent coefficients in the Redfield-like master equation is smaller.
The Redfield-like dynamical maps are CP-divisible only for the choices $x_i=x_j\leq x_k$, with $i \neq j \neq k$ (all $\gamma^{\tmop{Red}}(t)$'s are positive for all times). This is in contrast to the original evolution, where a significant fraction of the dynamics are CP-divisible \cite{moj-nonMarkovianity}. Note that there are only four choices of parameters $x_1, x_2, x_3$ (black dots in Fig.~\ref{fig:LoosingCPdiv}), for which the resulting original generator has only one eigenvalue and the corresponding dynamical map is CP-divisible. In accordance with Corollary \ref{cor:str} and Corollary \ref{cor:cpdiv} these properties are also present in the approximated dynamics.\\
These results put into question the idea that the Redfield-like approximation represents
a coarse graining in time of the open-system dynamics leading toward a memoryless evolution. Indeed, our analysis is complementary with respect to the investigation of the validity of the approximation itself given
the overall system-environment microscopic description \cite{Benatti2005,Whitney2008,Hartmann2020}.

The results of this Section 
are summarized in Table~\ref{tab:CP-div}.
When going to higher dimensional systems the situation gets more involved and no easy connections between the (C)P-divisibility of the exact and approximated dynamics were found. However, in a case of the generalized Pauli channels \cite{KatarzynaSiud2016} some statements are still possible, as elaborated in \ref{app:pdivGen}.

\begin{table}[h!]
\centering
 \renewcommand{\arraystretch}{1.3}
\begin{tabularx}{1.0\textwidth} { 
  | >{\raggedright\arraybackslash}X 
  !{\vrule width 2pt} >{\centering\arraybackslash}X 
  >{\centering\arraybackslash}X
   >{\centering\arraybackslash}X | }
 \hline
  & exact dynamics & & Redfield-like
\\
 \noalign{\hrule height 2pt}
 single eigenvalue  & C(P)-div &  $\Rightarrow$ & \normalsize C(P)-div   \\
& &$\nLeftarrow$ &\\
  \hline
 Pauli channel  
  & P-div &$\Rightarrow$ & P-div  \\
  & CP-div &$\nRightarrow$  & CP-div  \\
   \hline
\end{tabularx}
\caption{Connections between C(P)-divisibility property for exact and Redfield-like approximated dynamics.
Arrows $\Rightarrow$ and barred arrows $\nRightarrow$ express the implications shown by means of proofs and counterexamples (in the case of a single
non-zero eigenvalue by Corollary \ref{cor:cpdiv} and Example \hyperref[ex:3]{3}, and in the case of a Pauli channel by Corollary \ref{cor:pdivPauli} and Example \hyperref[ex:4]{4}, respectively).}
\label{tab:CP-div}
\end{table}

\section{Summary}\label{sec:summ}

In this paper, we investigated the connection between the time-local and the memory-kernel master equations associated with 
the dynamics of an open quantum system. Focusing on the class of commutative dynamics and making use of the damping-basis
approach, we formulated a general strategy to obtain each kind of master equation from the other one.
Only transformations among functions of time and their Laplace transforms are involved,
as the operational structure of the two master equations coincide in the damping-basis picture. 
In the presence of a single non-zero eigenvalue also the Lindbladian structure
of the time-local generator and the memory kernel is exactly the same.
Instead, when more eigenvalues are present, new Lindbladian operators can be generated in going from the time-local
master equation to the integro-differential one, or viceversa.

Furthermore, we analyzed the impact of the approximation leading to the Redfield-like master equation on both
the operatorial structure of the master equation and the Markovianity of the dynamics.
In the case of a single non-zero eigenvalue both CP-divisibility and P-divisibility are preserved,
but this is no longer guaranteed for more general dynamics.
In particular, restricting to Pauli dynamical maps, we showed that, while P-divisibility is still preserved,
CP-divisibility can be lost as a result of the Redfield-like coarse graining in time. 

Our results highlight the relevance of describing open system dynamics with different representations -- such as the time-local or the integro-differential master equations, as well as the damping-basis or the Lindbladian picture -- and the possibility to find direct connections among them. Indeed, it will be of interest to push forward such an analysis and deal with more general types of evolution in order to shed light on their relevance and usefulness for the description and characterization of non-Markovianity.

\ack

NM acknowledges funding by the Alexander von Humboldt Foundation in form of a Feodor-Lynen Fellowship, AS and BV acknowledge support from the FFABR project of MIUR. BV acknowledges support from the Joint Project ``Quantum Information Processing in Non-Markovian Quantum Complex Systems'' funded by FRIAS, University of Freiburg and IAR, Nagoya University. All authors acknowledge support from the Unimi Transition Grant H2020.

\section*{References}

\bibliographystyle{unsrt}
\bibliography{tclMkRed} 

\appendix

\section{Proof of Proposition \ref{prop:str}}\label{app:proofp1}
\begin{proof}
First, we recall that, assuming Eq.~(\ref{eq:ktcl}), Eq.~(\ref{eq:tcl}) 
is solved by (see Eq.~(\ref{eq:mapdiag}))
\begin{equation}
\Lambda_t  = \sum_{\alpha=1}^{N^2}  m_{\alpha}(t) \mathcal{M}_{\alpha},
\end{equation}
where $m_{\alpha}(t)$ is defined
via
\begin{equation}\label{eq:mmat}
   \log  m_{\alpha}(t) = \int_0^t \mathd \tau m^{\tmop{TCL}}_{\alpha}(\tau)
\end{equation}
and we have used
\begin{equation}
\sum\limits_{\alpha=1}^{N^2}\mathcal{M}_{\alpha}=\id, \qquad \mathcal{M}_{\alpha}\mathcal{M}_{\beta}=\delta_{\alpha,\beta}\mathcal{M}_{\alpha}.
\nonumber
\end{equation}
Moving to the Laplace domain due to linearity we get
\begin{equation}\label{eq:a1}
\widetilde{\Lambda}_u = \sum_{\alpha=1}^{N^2} \widetilde{m_{\alpha}}(u) \mathcal{M}_{\alpha};
\end{equation}
and for the memory kernel
\begin{equation}\label{eq:a2}
\widetilde{\mathcal{K}^{\tmop{NZ}}}_u = u \mathds{1} - \left(\widetilde{\Lambda}_u\right)^{-1}.
\end{equation}
Eq.~(\ref{eq:a1}) implies
\begin{equation}\label{eq:a3}
\left(\widetilde{\Lambda}_u\right)^{-1} = \sum_{\alpha=1}^{N^2} 
\left(\widetilde{m_{\alpha}}(u) \right)^{-1} \mathcal{M}_{\alpha},
\end{equation}
which, replaced in Eq.~(\ref{eq:a2}), 
yields Eq.~(\ref{eq:knz}) 
with (using also $\sum_{\alpha=1}^{N^2}\mathcal{M}_{\alpha} = \mathds{1}$)
$$
m^{\tmop{NZ}}_{\alpha}(t) = \mathfrak{I}\left(u-\left(\widetilde{m_{\alpha}}(u) \right)^{-1}\right);
$$
Eq.~(\ref{eq:mnz}) follows from the property of the Laplace transform of a derivative
$\tilde{\dot{f}}(u) = u \widetilde{f}(u)-f(0)$. We further have, from the definition of $G_{\alpha}(t)$ in Eqs.~(\ref{eq:ggammat})
\begin{equation}
  \label{eq:5}
  m_{\alpha}(t) =1+ \int_0^t \mathd \tau G_{\alpha}(\tau)
\end{equation}
so that
\begin{equation}
  \label{eq:6}
  m^{\tmop{TCL}}_{\alpha}(t)=\frac{\dot m_{\alpha}(t)}{m_{\alpha}(t)}
\end{equation}
implying Eq.~(\ref{eq:tcltcl}).
\end{proof}

\section{Proof of Proposition \ref{prop:coeff}}\label{app:proofp3}
\begin{proof}
Using Eqs.~(\ref{eq:mnz}) and (\ref{eq:mred}) and going to the Laplace domain, 
we have 
$$
\widetilde{m^{\tmop{Red}}_{\alpha}}(u) = \frac{1}{u} \widetilde{m^{\tmop{NZ}}_{\alpha}}(u)
= \frac{\widetilde{G_{\alpha}}(u)}{1+\widetilde{G_{\alpha}}(u)},
$$
from which
$$
\widetilde{m^{\tmop{Red}}_{\alpha}}(u)= \widetilde{G_{\alpha}}(u)-\widetilde{G_{\alpha}}(u)\widetilde{m^{\tmop{Red}}_{\alpha}}(u),
$$
which directly implies Eq.~(\ref{eq:integro}) when going back to the time domain.
Eq.~(\ref{eq:recur}) is obtained from Eq.~(\ref{eq:integro}) by iteration
or, equivalently, via the geometric series, whose convergence is guaranteed by Eq.~(\ref{eq:condpr}).
\end{proof}
\section{Lindbladian form: higher dimensional example}\label{app:proofhigherdim}
We will consider now an example in higher dimensions, $\mathcal{H}=\mathbb{C}^3$, showing how the
dimensionality strongly enhances the lack of correspondence
in the operatorial structures of the time-local generator and memory kernel.

Consider the (normalized) Gell-Mann matrices
\begin{gather*}
\sigma_1 =\frac{1}{\sqrt{2}} \matthree {0}{1}{0}{1}{0}{0}{0}{0}{0},\quad
\sigma_2 = \frac{1}{\sqrt{2}} \matthree {0}{-i}{0}{i}{0}{0}{0}{0}{0},\quad
\sigma_3 = \frac{1}{\sqrt{2}} \matthree {1}{0}{0}{0}{-1}{0}{0}{0}{0},\\[1ex]
\sigma_4 = \frac{1}{\sqrt{2}} \matthree {0}{0}{1}{0}{0}{0}{1}{0}{0},\quad
\sigma_5 = \frac{1}{\sqrt{2}} \matthree {0}{0}{-i}{0}{0}{0}{i}{0}{0},\quad
\sigma_6 = \frac{1}{\sqrt{2}} \matthree {0}{0}{0}{0}{0}{1}{0}{1}{0},\\[1ex]
\sigma_7 = \frac{1}{\sqrt{2}} \matthree {0}{0}{0}{0}{0}{-i}{0}{i}{0},\quad
\sigma_8 = \frac{1}{\sqrt{6}} \matthree {1}{0}{0}{0}{1}{0}{0}{0}{-2}
\end{gather*}
along with 
\begin{gather*}
S_+ = \matthree {0}{1}{0}{0}{0}{1}{0}{0}{0},\quad
S_- = \matthree {0}{0}{0}{1}{0}{0}{0}{1}{0}.\quad
\end{gather*}
The latter are the ladder operators in $\mathbb{C}^3$, and then the Example 2
is generalized to 3-level systems by looking at the memory kernel
[compare with Eq.~(\ref{eq:exnz2})]
\begin{equation}\label{eq:ex4}
   \mathcal{K}^{\tmop{NZ}}_t(\omega) = k(t) \left(S_- \omega S_+ -\frac{1}{2}(S_+ S_- \omega + \omega S_+ S_- ) 
   + S_+ \omega S_- -\frac{1}{2}(S_- S_+ \omega + \omega S_- S_+ )
    \right).
\end{equation}
This is still a self-adjoint generator, with now four non-zero eigenvalues,
and the corresponding damping-basis decomposition reads
\begin{eqnarray}
\hspace{-2,5cm}
\left\{m^{\tmop{NZ}}_\alpha(t)\right\}_{\alpha} = \left\{0,-3k(t),-\frac{5}{2}k(t),-\frac{5}{2}k(t),-k(t),-k(t),-k(t), -\frac{1}{2}k(t),-\frac{1}{2}k(t)\right\}\label{eq:valueex4}\\
\hspace{-2,5cm}
\left\{\tau_\alpha\right\}_{\alpha} = \left\{\frac{\mathds{1}}{\sqrt{3}},
-\frac{\sqrt{3}\sigma_3}{2}+\frac{\sigma_8}{2},
-\frac{\sigma_2}{\sqrt{2}}+\frac{\sigma_7}{\sqrt{2}},
-\frac{\sigma_1}{\sqrt{2}}+\frac{\sigma_6}{\sqrt{2}},
\frac{\sigma_3}{2}+\frac{\sqrt{3}\sigma_8}{2},
\sigma_4,
\sigma_5,\right. \nonumber \\ \left. \hspace{7,5cm}
,\frac{\sigma_2}{\sqrt{2}}+\frac{\sigma_7}{\sqrt{2}},
\frac{\sigma_1}{\sqrt{2}}+\frac{\sigma_6}{\sqrt{2}}
\right\}. \label{eq:vecex4}
\end{eqnarray}
Once again, applying Proposition 
\ref{prop:str} both the time local generator and the memory kernel
can be written as
\begin{equation}
  \mathcal{K}^X_t(\omega) = \sum_{\alpha=0}^8
    m^X_\alpha(t) \mbox{Tr}\left[\tau_\alpha \omega\right] \tau_\alpha   \qquad X = \tmop{TCL}, \tmop{NZ},\label{eq:ex4aux}
\end{equation}
where the memory-kernel eigenvalues are given by Eq.~(\ref{eq:valueex4}),
while the time-local-generator ones are obtained via Eqs.(\ref{eq:mtcl}) and (\ref{eq:ggammat});
the eigenbasis is indeed given in both cases by Eq.~(\ref{eq:vecex4}).
By means of Eqs.~(\ref{eq:coeffl})-(\ref{eq:diagop}), the generator above can be written in
Lindbladian form as
\begin{eqnarray}
    \mathcal{K}^X_t &=& \frac{1}{6}(m^X_1(t)-3m^X_3(t)) \left(
    \mathcal{D}_{(\sigma_3+\sqrt{3}\sigma_8)/2} +\mathcal{D}_{\sigma_4}+\mathcal{D}_{\sigma_5}
    \right)\nonumber\\
    &&+\frac{1}{6}(3m^X_1(t)-4m^X_2(t)+3m^X_3(t)-4m^X_4(t)) \mathcal{D}_{(-\sqrt{3}\sigma_3+\sigma_8)/2} \nonumber\\
    &&-\frac{1}{6}(2m^X_1(t)-3m^X_2(t)+3m^X_4(t))
    \left(
    \mathcal{D}_{(-\sigma_2+\sigma_7)/\sqrt{2}}+\mathcal{D}_{(-\sigma_1+\sigma_6)/\sqrt{2}}\right) \nonumber\\
    &&-\frac{1}{12}(2m^X_1(t)+3m^X_2(t)-3m^X_4(t)
    \left(\mathcal{D}_{S_+}+\mathcal{D}_{S_-}\right),
\end{eqnarray}
where we used $\mathcal{D}_A$ to denote the Lindblad dissipator with respect to the operator $A$.
The previous expression reduces to Eq.~(\ref{eq:ex4}) for eigenvalues as in Eq.~(\ref{eq:valueex4}),
so that one recovers the Lindblad-operator structure of the memory kernel with two contributions only, while the time-local
generator will have in general 8 Lindblad operators.

\section{P-divisibility in higher dimensions}\label{app:pdivGen}

For generalised Pauli dynamical maps, no sufficient and necessary conditions for P-divisibility are known. However, with the necessary condition as given in \cite{KatarzynaSiud2016}, one can obtain the following corollary.

\begin{corollary}\label{cor:pdivGenPauli}
Consider a generalized Pauli dynamical map for which Eq.~\eqref{eq:condpr} holds. If at least one eigenvalue of the Redfield-like time-local generator takes on positive values for some time $t\geq0$, then the original dynamical map is not P-divisible.
\end{corollary}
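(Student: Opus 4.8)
The plan is to argue by contraposition and reduce the statement to the sign analysis already employed in the proof of Corollary~\ref{cor:pdivPauli}. Suppose the original generalized Pauli dynamical map $\left\{\Lambda_t\right\}_{t\geq0}$ is P-divisible. A generalized Pauli map is of random-unitary type, hence commutative and diagonalizable with a time-independent damping basis, so it satisfies the hypotheses of Propositions~\ref{prop:str}, \ref{prop:red} and \ref{prop:coeff}: the time-local generator, the memory kernel and the Redfield-like generator all admit the damping-basis decomposition in terms of the same maps $\mathcal{M}_\alpha$, with eigenvalues $m^{\tmop{TCL}}_\alpha(t)$, $m^{\tmop{NZ}}_\alpha(t)$ and $m^{\tmop{Red}}_\alpha(t)$ related through Eqs.~\eqref{eq:mnz}--\eqref{eq:ggammat} and \eqref{eq:mred}--\eqref{eq:recur}; since Eq.~\eqref{eq:condpr} is assumed, the series \eqref{eq:recur} converges.

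I would then invoke the necessary condition for P-divisibility of generalized Pauli channels established in \cite{KatarzynaSiud2016}. Rephrased in the damping-basis language, P-divisibility of $\left\{\Lambda_t\right\}_{t\geq0}$ forces the moduli of the eigenvalues $m_\alpha(t)$ of $\Lambda_t$ to be non-increasing, equivalently $m^{\tmop{TCL}}_\alpha(t)\leq0$ for every $\alpha$ and every $t\geq0$, just as in Eq.~\eqref{eq:gg} in the qubit Pauli case. Using Eq.~\eqref{eq:ggammat} together with $m_\alpha(t)=e^{\int_0^t\mathd\tau\,m^{\tmop{TCL}}_\alpha(\tau)}>0$, we get
\[
G_\alpha(t)=\frac{\mathd}{\mathd t}\,e^{\int_0^t\mathd\tau\,m^{\tmop{TCL}}_\alpha(\tau)}=m^{\tmop{TCL}}_\alpha(t)\,m_\alpha(t)\leq0
\]
for all $\alpha$ and all $t\geq0$.

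Finally I would feed this into the series representation \eqref{eq:recur}. Since the convolution of two functions of the same constant sign is non-negative, an iterated convolution of $j$ copies of the non-positive function $G_\alpha$ is non-positive for $j$ odd and non-negative for $j$ even; in either case the corresponding term $(-)^{j+1}\,G_\alpha\ast\cdots\ast G_\alpha(t)$ of \eqref{eq:recur} is non-positive. Hence $m^{\tmop{Red}}_\alpha(t)\leq0$ for all $\alpha$ and all $t\geq0$, i.e.\ no eigenvalue of $\mathcal{K}^{\tmop{Red}}_t$ ever takes a positive value. This contradicts the hypothesis, so $\left\{\Lambda_t\right\}_{t\geq0}$ cannot be P-divisible, which is the claim.

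The only genuinely delicate step is the second one: making precise that the necessary condition of \cite{KatarzynaSiud2016} translates into -- or at least implies -- the non-positivity of the damping-basis eigenvalues of $\mathcal{K}^{\tmop{TCL}}_t$. Once that identification is in place, the remainder is exactly the monotonicity-of-convolution bound on \eqref{eq:recur} used for Corollary~\ref{cor:pdivPauli}, with the convergence hypothesis \eqref{eq:condpr} ensuring the termwise sign reasoning is legitimate.
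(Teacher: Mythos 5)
Your proposal is correct and follows essentially the same route as the paper's proof: contraposition, the necessary condition from \cite{KatarzynaSiud2016} giving $m^{\tmop{TCL}}_\alpha(t)\leq 0$, and the sign argument on the iterated-convolution series \eqref{eq:recur} to conclude $m^{\tmop{Red}}_\alpha(t)\leq 0$. You merely spell out the intermediate step $G_\alpha(t)\leq 0$ and the termwise sign bookkeeping, which the paper leaves implicit by referring back to the proof of Corollary~\ref{cor:pdivPauli}.
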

\begin{proof}
Assume that the original dynamics is P-divisible. As elaborated in \cite{KatarzynaSiud2016}, from this it follows that $m^{TCL}_\alpha(t)\leq 0$
for any $t\geq0$. However from \eqref{eq:recur} we conclude then that $m^{\tmop{Red}}_\alpha(t)\leq 0  \ , \forall \alpha$. Accordingly, if for some $\alpha$ and $t\geq 0$ $m^{\tmop{Red}}_\alpha(t)>0$, then the original dynamical map is not P-divisible.
\end{proof}

\end{document}